\newcommand{\ra}[1]{\renewcommand{\arraystretch}{#1}}
\definecolor{mybrown}{RGB}{33,34,28}
\definecolor{myyellow}{RGB}{242,226,149}
\definecolor{mygreen}{RGB}{176,232,145}
\definecolor{myblue}{RGB}{61,139,189}
\definecolor{myorange}{RGB}{245,156,74}
\definecolor{mypurple}{RGB}{230,111,148}
\definecolor{myred}{RGB}{215,80,50}
\tikzstyle{convex1trans}=[color=black,opacity=0.55]
\tikzstyle{convex1}=[color=black!55]
\tikzstyle{convex2}=[color=black!20]
\newcommand{\PSVR}{$\mathcal{PSVR}$\xspace}
\newcommand{\ps}{pub/sub\xspace}
\tikzstyle{treenode}=[draw,circle,minimum width=12pt,inner sep=0pt,outer sep=0pt]
\tikzstyle{root}=[treenode]
\tikzstyle{l1}=[treenode,rectangle,minimum height=12pt]
\tikzstyle{l2}=[treenode]
\newcommand{\deltaS}{\delta_S\xspace}
\tikzset{circle split part fill/.style  args={#1,#2}{%
 alias=tmp@name, 
  postaction={%
    insert path={
     \pgfextra{%
     \pgfpointdiff{\pgfpointanchor{\pgf@node@name}{center}}%
                  {\pgfpointanchor{\pgf@node@name}{east}}%
     \pgfmathsetmacro\insiderad{\pgf@x}
      \fill[#1] (\pgf@node@name.base) ([xshift=-\pgflinewidth]\pgf@node@name.east) arc
                          (0:180:\insiderad-\pgflinewidth)--cycle;
      \fill[#2] (\pgf@node@name.base) ([xshift=\pgflinewidth]\pgf@node@name.west)  arc
                           (180:360:\insiderad-\pgflinewidth)--cycle;            
         }}}}}  
\tikzset{
        hatch distance/.store in=\hatchdistance,
        hatch distance=5pt,
        hatch thickness/.store in=\hatchthickness,
        hatch thickness=5pt
        }
\pgfqpoint{\hatchdistance}{\hatchdistance}}
\pgfqpoint{\hatchdistance}{\hatchdistance}}
\newcommand*\circled[1]{\tikz[baseline=(char.base)]{
            \node[shape=circle, minimum size = 0.39cm, draw,inner sep=0.5pt] (char) {#1};}}
\definecolor{pinegreen}{cmyk}{0.92,0,0.59,0.25}
\definecolor{royalblue}{cmyk}{1,0.50,0,0}
\definecolor{lavander}{cmyk}{0,0.48,0,0}
\definecolor{violet}{cmyk}{0.79,0.88,0,0}
\tikzstyle{follow}=[circle, draw, very thin,fill=white]
\tikzstyle{lead}=[circle, draw, thick,fill=white]
\tikzstyle{lead2}=[circle, draw, thick,fill=white, densely dashed]
\tikzstyle{lead3}=[circle, draw, thick,fill=white, densely dotted]
\tikzstyle{lead4}=[circle, draw, thick,fill=white, densely dashdotted]
\tikzstyle{hidden}=[circle, draw, opacity=0, densely dashed]
\newcommand{\publish}[1][]{\emph{{publish}(#1)}\xspace}
\newcommand{\subscribe}[1][]{\emph{{subscribe}(#1)}\xspace}
\newcommand{\sendsub}[1][]{\emph{broadcast(#1)}\xspace}
\newcommand{\getPosClosestTo}[1][]{\emph{\textbf{getPosClosestTo}(#1)}\xspace}
\newcommand{\isBetween}[1][]{\emph{\textbf{isBetween}(#1)}\xspace}
\newcommand{\sendRing}[1][]{\emph{\textbf{sendOnRing}(#1)}\xspace}
\newcommand{\borderPos}[1][]{ep\xspace}
\newcommand{\nothing}[1]{}
\newcommand{\expireTimer}[0]{t_s\xspace}
\newcommand{\OMNET}{\mbox{OMNeT++}\xspace}
\newcommand{\MIXIM}{\mbox{MiXiM}\xspace}
\tikzstyle{legend_isps}=[rectangle, rounded corners, thin, 
\tikzstyle{legend_overlay}=[rectangle, rounded corners, thin,
\tikzstyle{legend_phytop}=[rectangle, rounded corners, thin,
\tikzstyle{legend_general}=[rectangle, rounded corners, thin,
\newcommand{\fwdTable}[1][]{RS(v)\xspace}
\newcommand{\fwdTableWOUTv}[1][]{RS\xspace}
\newcommand{\parent}[1][]{$par$\xspace}
\newcommand{\child}[1][]{Chd\xspace}
\newcommand{\neighTableSize}[1][]{C_{N}\xspace}
\newcommand{\numOChannels}[1][]{C_{c}\xspace}
\newcommand{\neighTable}[1][]{$N$\xspace}
\newcommand{\positionTable}[1][]{R\xspace}
\newcommand{\ownPosVec}[1][]{$P$\xspace}
\DeclareMathAlphabet\mathbfcal{OMS}{cmsy}{b}{n}
\xpatchcmd{\algorithmic}{\ALG@tlm\z@}{\ALG@tlm\z@\leftmargin 0pt}{}{}
\begin{document}
\title{$\mathcal{PSVR}$ - Self-stabilizing Publish/Subscribe Communication for Ad-hoc Networks }
\date{2015}


\author{%
	  G. Siegemund \and  V. Turau
}
\institute{%
	Institute of Telematics, 
	Hamburg University of Technology, Hamburg, Germany.
	\email{\{gerry.siegemund,turau\}@tuhh.de} 
}

\maketitle

\begin{abstract}
  This paper presents the novel routing algorithm $\mathcal{PSVR}$ for
  pub/sub systems in ad-hoc networks. Its focus is on scenarios where
  communications links are unstable and nodes frequently change
  subscriptions. $\mathcal{PSVR}$ presents a compromise of size and
  maintenance effort for routing tables due to sub- and
  unsubscriptions and the length of routing paths. Designed in a
  self-stabilizing manner it scales well with network size. The
  evaluation reveals that $\mathcal{PSVR}$ only needs slightly more
  messages than a close to optimal routing structure for publication
  delivery, and creates shorter routing paths than an existing
  self-stabilizing algorithm. A real world deployment shows the
  usability of the approach.
\end{abstract}


\section{Introduction}
Industrial wireless sensor networks are an emerging field for process
monitoring and control that require dynamic forms of the many-to-many
communication paradigm for data dissemination. This communication
style is best supported by publish/subscribe (pub/sub) systems instead of
request-reply messaging. In channel-based pub/sub systems, publishers
assign each message to one of several channels which are known by all
nodes. Subscribers express interest in one or more channels (a.k.a.\
subscribing to the channel) and only receive messages assigned to
these. The pub/sub paradigm guarantees disseminating all
messages to nodes with a subscription for that channel. The
advantage is the loose coupling, i.e., publishers are unaware of the
subscribers that receive their messages. Nodes can at any time
give up subscriptions and create new ones.

The efficiency of message dissemination in pub/sub systems depends on
the used routing strategy. The goal is to deliver each publication
with a minimum number of messages to all subscribers. The minimum
number of messages is used when the publication is routed along the
Steiner tree for the publishing node and all nodes subscribing to the
message's channel. Since Steiner trees are computationally too
expensive many systems use a fixed spanning tree for routing. A
publisher recursively forwards a message into those subtrees that
contain a subscriber for the message's channel. This requires each
node to provide the necessary information and does in general not
result in the shortest routing path. Other systems organize their
nodes into a virtual ring. A published message is then simply
forwarded once around this ring and thereby delivered to all
subscribers. This does not require any routing tables and there is no
need to distribute un-/subscriptions into the network. Unfortunately
this requires at least as many messages as nodes in the virtual ring.

In this paper we consider scenarios where nodes frequently change
their subscriptions, hence, an efficient update of the routing
structure is required. Also delivery of publications must be
guaranteed while subscriptions are changing. To meet this goal we
propose the routing algorithm $\mathcal{PSVR}$, which is a significant
extension of the algorithm in \cite{Siegemund_VR:2015}.
$\mathcal{PSVR}$ presents a compromise between the length of routing
paths and the effort to maintain the routing tables. One of the core
ideas is to augment routing on the virtual ring by shortcuts. We show
that for a specific class of graphs on average the increase of the
length of routing paths is bearable and updating a node's subscription
list is simple. To increase system robustness and to tolerate the
failure and recovery of links and nodes the proposed algorithms are
self-stabilizing. The effectiveness of the proposed algorithm is shown
through simulations using a realistic channel model and by a
comparison with a self-stabilizing tree-based approach.

\section{Related Work}
\label{sec:soa}
The general state of the art for pub/sub systems for WSN is summarized
in a recent survey \cite{Sheltami:2015}. Detti et al.\ classify
pub/sub systems into pull and push systems \cite{Detti:2015}. In the
first model nodes interested in a channel periodically flood the network
with interest messages upon which nodes respond with cached
publications for this channel via reverse paths. The number of sent
messages is dominated by the frequency of of issued interest messages
-- which reflects latency -- and not by the number of subscriptions.
Such an approach is of advantage in mobile environments where routing
structures are quickly outdated. In the push model the number of
messages sent mainly depends on the rate of publications and the
number of subscribers, given a routing structure. For static
environments this approach is of advantage. Baldoni et al.\
distinguish between message and subscription forwarding
\cite{baldoni:2005}. In the former case all publications are forwarded
via a fixed spanning tree. Thus, the number of forwarded messages does
not scale with the number of network nodes. Subscription forwarding
permits to establish a routing structure that allows to forward
publications to subscribers only. This way the number of forwarded
messages is independent of the total number of nodes but depends on
the number of subscribers and their positions.


Message forwarding is mainly of interest if the number of subscribers
is large compared to the number of nodes and if the number of
publications is low. Mires~\cite{souto2006mires} is a pub/sub
middleware for WSNs where the sink is the sole subscriber. Nodes
advertise the data they can provide and the sink thereafter informs
nodes about its interest. Routing is performed along a fixed tree.
Fault tolerance is not considered. Proposed standards such as MQTT-S
\cite{Hunkeler:2008} and DDS 
\cite{pardo2003omg} mainly address QoS and are not
tailored towards resource constrained networks. Pub/sub systems such
as Scribe use overlay networks based on distributed hash tables
\cite{Castro:2006}. Overlay networks are logical networks on top of
real network where links correspond to paths in the underlying
network, which are usually IP-based networks. Thus, this approach is
unsuitable for WSNs. One of the first pub/sub system dedicated to
wireless ad-hoc networks is described in \cite{Huang:2003}. A greedy
algorithm builds a tree for each node which is used to route
publications to subscribers. Fault tolerance is not addressed.

The first proposal for subscription forwarding is directed diffusion
\cite{Intanagonwiwat:2003}. Each subscription sets up gradients in
the network, these are used to deliver publications. Even so nodes
cache information from previous subscriptions the message overhead is
high. Negative reinforcement is used to eliminate loops. Directed
diffusion provides some degree of fault tolerance by maintaining
alternative paths.


We are only aware of two self-stabilizing pub/sub systems
\cite{Jaeger:2008,Shen:2007}. Jaeger's system uses a broker overlay
network to route publications to subscribers connected to brokers,
these only forward the data \cite{Jaeger:2008}. Subscriptions and
advertisements are used to generate routing tables. The leasing
technique is used to fix possible faults in these routing tables. The
renewal of leases is triggered by periodically dispensed subscription
messages, an expired lease leads to the removal of the entry. Shen
uses a spanning tree to route publications \cite{Shen:2007}. Nodes
maintain routing tables to forward publications. To provide fault
tolerance routing tables are exchanged periodically. This mechanism
cannot tolerate all types of faults, e.g., the concurrent loss of
routing entries in several nodes. A self-repairing content-based
routing algorithm is described in \cite{Mottola:2008}.

A disadvantage of all tree-based routing approaches is that only the
$n-1$ communication links of the tree are used \cite{Chen:2013}. For
dense networks this excludes the majority of links and leads to long
routing paths. To circumvent this disadvantage a self-stabilizing
pub/sub system based on a virtual ring is introduced in
\cite{Siegemund_VR:2015}. A virtual ring is a directed closed path
over all nodes. It allows for a very simple dissemination of
publications without requiring knowledge of the topology, but
forwarding paths can be much longer than the shortest paths. The
remedy used in \cite{Siegemund_VR:2015} is to use edges that are not
part of the ring as short-cuts. The result is a compromise between the
complexity of the routing tables and the lengths of the forwarding
paths. A positive aspect is that it is easy to adapt the structure to
new subscribers, but the approach of \cite{Siegemund_VR:2015} has
several shortcomings. Firstly, nodes may receive a subscription
several times. Secondly, subscription messages are forwarded to all
subscribers. Also publications are not discarded by the last
subscriber on the ring but sent further along the ring.
Unsubscriptions are only marginally addressed
in~\cite{Siegemund_VR:2015}. If a stale routing table entry is not
refreshed within the leasing period, it is removed and all further
messages are routed along the virtual ring instead. This leads to a
temporary loss of routing information, hence, to longer routing paths.

\section{Foundation}
\label{sec:foundation}

Let $G=(V,E)$ be an undirected graph with $n$ nodes. A virtual ring is
a closed path over all nodes formally defined as follows.

\begin{definition} 
  A sequence \mbox{$R=\langle v_0, \dots, v_{l-1} \rangle$} of nodes
  $v_i \in V$ is called a \textit{virtual ring} if each $v\in V$
  appears at least once in $R$ and if each $v_i$ is a neighbor of
  $v_{i+1}$ (indices are taken modulo $l$); $l$ is called the
  \textit{length} of $R$. For $v \in V$ each $i$ with $v=v_i$ is
  called a \textit{position} of $v$. The list of positions of $v$ is
  denoted by $Pos(v)$.
\label{def:vr}
\end{definition}

Each connected graph possesses a virtual ring. Note that $l = \sum_{v
  \in V} |Pos(v)|$. For a virtual ring of short length the sets
$Pos(v)$ must be small. Only Hamiltonian graphs have virtual rings
with $|Pos(v)| = 1$ for each $v \in V$ (i.e., $l=n$). A depth-first
traversal of a tree $T$, where every node visit is recorded with an
incremented value, determines a virtual ring $R$ and all node
positions. A node $v$ has as many positions on $R$ as $v$ has
neighbors in $T$, i.e., $l=2(n-1)$. Figure~\ref{fig:stack_small} shows
a spanning tree (bold edges) for a topology with six nodes (left).

\vspace{-5mm}
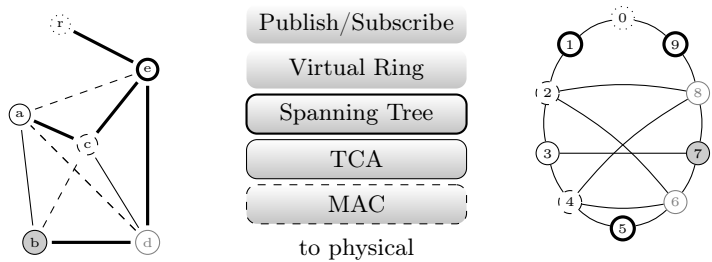
\begin{figure}[h]
\centering
\small
~\\
~
~\\
~\\
\begin{tikzpicture}
\tikzset{
    mynode/.style={rectangle,align=center, rounded corners,draw=black, top color=white, bottom color=gray!45!black!30,	inner sep=0.2em, minimum size=1.6em, minimum width=9em, text centered},
    emptynode/.style={rectangle,align=center, rounded corners,draw=white, top color=white, bottom color=white,
    					thin, inner sep=0.2em, minimum size=1.6em, minimum width=9em, text centered},
    myarrow/.style={<->, >=latex', shorten >=1pt, thick},
    mylabel/.style={text width=7em, text centered},
    dotA/.style={circle, draw=black, font=\tiny, inner sep=0.5mm}, %
    edgeA/.style={shorten >= 0.6mm, shorten <= 0.6mm, dashed}, %
    edgeChosen/.style={shorten >= 0.6mm, shorten <= 0.6mm, very thick}, %
    edgeTree/.style={shorten >= 0.6mm, shorten <= 0.6mm, thin}, %
    dotVR/.style={circle, draw=black, font=\tiny, inner sep=0.5mm}, %
    edgeVR/.style={shorten >= 0.2mm, shorten <= 0.2mm, bend left},%
    connecto/.style={shorten >= 0.5mm, shorten <= 0.5mm}%

}  

		
\node[mynode, draw=none] at (0,10) (l1) {Publish/Subscribe};  
\node[mynode, draw=none] at (0,9.4) (l2) {Virtual Ring};  
\node[mynode, thick] at (0,8.8) (l3) {Spanning Tree};  
\node[mynode, thin] at (0,8.2) (l4) {TCA};  
\node[mynode, dashed] at (0,7.6) (l5) {MAC};  
\node[emptynode] at (0,7) (l6) {to physical};  
 
\node[dotA, dotted] (d0) at (-3.9364,9.9904) {r};
\node[dotA] (d1) at (-4.4774,8.7932) {a};
\node[dotA, very thick] (d5) at (-2.7774,9.3932) {e};
\node[dotA, dashed] (d3) at (-3.5774,8.3932) {c};
\node[dotA, fill=gray!40!] (d2) at (-4.2774,7.0932) {b};
\node[dotA, gray] (d4) at (-2.7774,7.0932) {d};


\draw[edgeChosen] (d0) -- (d5);
\draw[edgeChosen] (d5) -- (d3);
\draw[edgeChosen] (d5) -- (d4);
\draw[edgeA] (d5) -- (d1);
\draw[edgeTree] (d4) -- (d3);
\draw[edgeChosen] (d4) -- (d2);
\draw[edgeA] (d4) -- (d1);
\draw[edgeA] (d2) -- (d3);
\draw[edgeA] (d4) -- (d1);
\draw[edgeChosen] (d1) -- (d3);
\draw[edgeTree] (d1) -- (d2);

\node[dotVR, dotted] (v0) at (3.5399,10.0791) {0};
\node[dotVR, very thick] (v1) at (2.8399,9.7291) {1};
\node[dotVR, dashed] (v2) at (2.5399,9.0791) {2};
\node[dotVR] (v3) at (2.5399,8.2791) {3};
\node[dotVR, dashed] (v4) at (2.8399,7.6291) {4};
\node[dotVR, very thick] (v5) at (3.5399,7.2791) {5};
\node[dotVR, gray] (v6) at (4.2399,7.6291) {6};
\node[dotVR, fill=gray!40!] (v7) at (4.5399,8.2791) {7};
\node[dotVR, gray] (v8) at (4.5399,9.0791) {8};
\node[dotVR, very thick] (v9) at (4.2399,9.7291) {9};

\path[] (v0)  edge [bend right = 15] (v1);
\path[] (v1)  edge [bend right = 10] (v2);
\path[] (v2)  edge [bend right = 10] (v3);
\path[] (v3)  edge [bend right = 10] (v4);
\path[] (v4)  edge [bend right = 15] (v5);
\path[] (v5)  edge [bend right = 15] (v6);
\path[] (v6)  edge [bend right = 10] (v7);
\path[] (v7)  edge [bend right = 10] (v8);
\path[] (v8)  edge [bend right = 10] (v9);
\path[] (v9)  edge [bend right = 15] (v0);

\path[] (v3)  edge [] (v7);
\path[] (v2)  edge [bend left = 10] (v8) edge [bend left = 10] (v6);
\path[] (v4)  edge [bend left = 10] (v8) edge [bend right = 10] (v6);

\end{tikzpicture}

\caption{Topology; layered system architecture; corresponding virtual ring graph.}
\label{fig:stack_small}
\end{figure}
\vspace{-3mm}
As in \cite{Siegemund_VR:2015} we use a topology control algorithm
(TCA) to mark a communication graph using only high quality,
bi-directional, stable links. The chosen TCA is dynamic, deteriorating
links are removed, while new promising links are added. The number of
maintained neighbors is limited to $\neighTableSize$ to accommodate
restricted memory resources. The TCA acts as a message filter for
broadcasted messages. Messages received from nodes that are not in the
current neighbor set are not dispatched to upper layers. Thus, links
not chosen by the TCA are transparent to upper layers. In
Fig.\ref{fig:stack_small} (left) edges selected by the TCA are
depicted as solid lines ($\neighTableSize=3$) while dashed edges were
excluded. To benefit from links selected by the TCA that are not part
of the virtual ring, \textit{shortcuts} are introduced.



\begin{definition}
An edge $(v_i , v_j)$ with $j \neq i+1$ is called a \textit{shortcut} in a ring $R$.
\label{def:shortcut}
\end{definition}


In the following a virtual ring based on depth-first traversal is
interpreted as a graph $G_R$ where the nodes correspond to the
positions of the original nodes. If $G$ has $n$ nodes, the virtual
ring $G_R$ graph has $2(n-1)$ nodes. The edges of $G_R$ correspond to
the edges of $R$ and the shortcuts of~$R$ in~$G$.
Figure~\ref{fig:stack_small} (right) shows the virtual ring graph
emerging from the given topology, the selection conducted by the TCA,
and the spanning tree on the left, i.e., \mbox{$R=\langle
  r,e,c,a,c,e,d,b,d,e\rangle$}. The edge between $c$ and $d$ in the
topology, results in the shortcuts between positions 2, 4 and 6, 8 in
$G_R$. A node's representation in the topology corresponds to the
appearance of its position on $G_R$.


With the virtual ring and the shortcuts in place, the pub/sub routing
algorithm can be explained. Nodes can take the role of publishers,
subscribers, both or none. Independent of their role, nodes forward
messages via links of the virtual ring graph. Two message types are
used: \textit{subscriptions} to build and update routing tables and
\textit{publications} to carry the data. Routing on each channel
is independent. The creation of channels is not explicitly stated
in~\cite{Siegemund_VR:2015}. In the following we assume that channels
are defined prior to system start-up, and their existence is known to
all nodes. Hereafter, since channels are independent of each other, if
not stated otherwise only a single channel is considered.

A trivial way to route publications on the virtual ring is to
consecutively hand them to each successor and to deliver them if a
corresponding subscription exists. When a publication returns to its
originator it is discarded. With the virtual ring in place, routing
tables are trivial. Even though this procedure is simple and
memory-conserving, nodes with multiple positions receive publications
repeatedly. Each message is forwarded $l$ times, i.e., independent of
the number of subscribers. This decreases robustness due to the
increased message loss probability and increases latency. Furthermore,
each node receives all publications in the network regardless of being
a subscriber or not. This trivial routing scheme is significantly
improved in~\cite{Siegemund_VR:2015} by using shortcuts. These lead on
average to shorter routing paths. Using the leasing technique, it is
shown that the system is self-stabilizing and therefore inherent
fault-tolerant. Nevertheless the approach is flawed, shortcomings in
every section of the pub/sub system have been identified and solutions
to those issues are presented next.

\section{$\mathcal{P}$ublish/$\mathcal{S}$ubscribe on $\mathcal{V}$irtual $\mathcal{R}$ings}
\label{sec:PSVRoverview}

The architecture of $\mathcal{PSVR}$ is shown in
Fig.~\ref{fig:stack_small}. For details about the virtual ring, the
spanning tree, and the TCA we refer to~\cite{Siegemund_VR:2015}. The
spanning tree layer is slightly augmented to enhance the dissemination
of subscriptions.




\subsubsection*{Routing tables in $\mathcal{PSVR}$.}
Each node $v$ maintains a routing structure $\fwdTable$ in form of a
$n_c \times n_p$ matrix, $n_c$ denotes the number of channels and
\mbox{$n_p = |Pos(v)|$}. $\fwdTableWOUTv$ stores tuples in the form
$\langle ns, \expireTimer, nstmp \rangle$. When a message for the $c_i
{^{th}}$ channel is received at the $p_j {^{th}}$ position, then
$\fwdTable[] [c_i,p_j].ns$ is the position of the subscriber for
channel $c_i$ which is counter clock wise (ccw) closest to the $p_j
{^{th}}$ position (called forwarding position). The components
$\expireTimer, nstmp$ are used for unsubscriptions (see
Sec.~\ref{sec:unSub}). Before described the routing of publications
the novel subscription dissemination on the pub/sub and on the tree
layer is presented.


\subsection{Subscriptions}
\label{sec:subOnTree}

Subscription messages are used to maintain the routing structures
$\fwdTableWOUTv$ at all nodes. Lost subscription messages do not lead
to a permanent omission of publications, because the leasing technique
guarantees the renewal of a subscription within time $\delta_S$.

\subsubsection*{Subscription Distribution Range.}
In~$\fwdTable$ the next ccw subscriber for each position of $v$ is
stored. 
A newly subscribing node $w$ requires that nodes update their
routing structure. In particular a node $u$ needs to update
$\fwdTableWOUTv{}(u)$ if and only if there exists a position $p_w \in
Pos(w)$ and a position $p_u \in Pos(u)$ such that $p_w$ is ccw in
between $p_u$ and $p_u^f$, where $p_u^f$ is the according forwarding
position in $\fwdTableWOUTv{}(u)$ for $p_u$. That is, only the
positions between a new subscriber~$w$ and the \textit{clock wise}
closest subscriber~$u$, i.e., all nodes in the interval $[u,w)$, need
to receive subscriptions from~$w$.
Figure~\ref{fig:subscribtionDist}~shows the stored next subscriber.
Positions in the interval $[7,9)$ record position~9 as next
subscriber. All other positions, i.e., the positions in the interval
$[9,7)$, store position~7.

\vspace*{-2mm}
\begin{figure}[h]
    \centering
    \usetikzlibrary{backgrounds}
\begin{tikzpicture}
\tikzset{
    mynode/.style={rectangle,align=center, rounded corners,draw=black, top color=white, bottom color=gray!45!black!30,	inner sep=0.2em, minimum size=1.6em, minimum width=9em, text centered},
    myarrow/.style={<->, >=latex', shorten >=1pt, thick},
    msgEdge/.style={-, >=latex', shorten >=3pt},
    mylabel/.style={text width=7em, text centered},
    dotA/.style={circle, draw=black, font=\tiny, inner sep=0.5mm}, %
    pub/.style={circle, draw=black, fill=black!60!gray, text = white, font=\tiny, inner sep=0.5mm}, %
    sub/.style={minimum size=3.8mm,circle, draw=black, fill=black!10!gray!30, font=\tiny, inner sep=0.5mm}, %
	subT/.style={densely dotted, minimum size=3.8mm,circle, draw=black, fill=black!10!gray!30, font=\tiny, inner sep=0.5mm}, %
	subO/.style={thick, minimum size=3.8mm,circle, draw=black, fill=black!10!gray!30, font=\tiny, inner sep=0.5mm}, %
    edgeA/.style={shorten >= 0.6mm, shorten <= 0.6mm, dashed}, %
    edgeChosen/.style={shorten >= 0.6mm, shorten <= 0.6mm, very thick}, %
    edgeTree/.style={shorten >= 0.6mm, shorten <= 0.6mm, thin}, %
    dotVR/.style={fill = white, inner sep=-2em, minimum size=3.8mm, circle, draw=black, font=\tiny, inner sep=0.5mm}, %
    edgeVR/.style={shorten >= 0.2mm, shorten <= 0.2mm, bend right=40, densely dotted},%
    connecto/.style={shorten >= 0.5mm, shorten <= 0.5mm}%

}

\node[] (v2) at (-0.3,7.02) {};
\node[dotVR, label={[font=\tiny]270:$ $}] (v3) at (0.1,6.86) {3};
\node[dotVR, label={[font=\tiny]268:$ $}] (v4) at (0.85,6.65) {4};
\node[dotVR, label={[font=\tiny]180:$ $}] (v5) at (1.79,6.49) {5};
\node[dotVR, label={[font=\tiny]270:$ $}] (v6) at (2.78,6.42) {6};
\node[sub, label={[font=\tiny]270:$ $}] (v7) at (3.71,6.42) {7};
\node[dotVR] (v8) at (4.7,6.5) {8}; 
\node[sub] (v9) at (5.58,6.65) {9};
\node[dotVR] (v10) at (6.3,6.78) {10};
\node[] (v11) at (6.8409,6.9312) {};

\path[->] (v2)  edge [bend right = 5]  (v3);
\path[->] (v3)  edge [bend right = 5] (v4);
\path[->] (v4)  edge [bend right = 5] (v5);
\path[->] (v5)  edge [bend right = 5] (v6);
\path[->] (v6)  edge [bend right = 5] (v7);
\path[->] (v7)  edge [bend right = 5] (v8);
\path[->] (v8)  edge [bend right = 5] (v9);
\path[->] (v9)  edge [bend right = 5] (v10);
\path[-] (v10)  edge [bend right = 5] (v11);

\path[->] (v4)  edge [bend left = 10] node[below] {}  (v9);

\node[label={[font=\tiny]270:}] at (-0.91,6.5) {};
\node[label={[font=\tiny]270:RS.ns}] at (-0.56,6.32) {};
\node[label={[font=\tiny]270:$7 $}] at (0.1,6.32) {};
\node[label={[font=\tiny]270:$7 $}] at (0.85,6.32) {};
\node[label={[font=\tiny]270:$7 $}] at (1.79,6.32) {};
\node[label={[font=\tiny]270:$7 $}] at (2.78,6.32) {};
\node[label={[font=\tiny]270:$9 $}] at (3.71,6.32) {};
\node[label={[font=\tiny]270:$9 $}] at (4.7,6.32) {};
\node[label={[font=\tiny]270:$7 $}] at (5.58,6.32) {};
\node[label={[font=\tiny]270:$7 $}] at (6.3,6.32) {};


\begin{scope}[on background layer]
\draw[gray!50, double=gray!15,double distance=5mm,smooth,line cap=round,tension=0.4] plot coordinates {(v7) (v8) };
\draw[gray!50, double distance=5mm,smooth,line cap=round,tension=0.4] plot coordinates {(v9)(v10) (v11)};
\draw[gray!50, double distance=5mm,smooth,line cap=round,tension=0.4] plot coordinates {(v2) (v3)(v4)(v5)(v6)};
\end{scope}

\node[circle, draw, white, fill=white,minimum size=7mm] at (-0.59,7.1) {};
\node[circle, draw, white, fill=white,minimum size=7mm] at (7.14,7) {};
\end{tikzpicture}
    \caption{Virtual ring with two subscribers (gray).}
    \label{fig:subscribtionDist}
\end{figure}
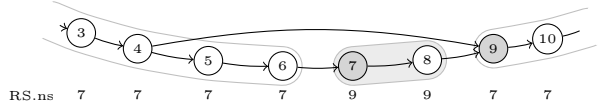
\vspace*{-5mm}

\subsubsection*{Distributed Subscription Routing over the Tree.}
Subscription messages can be spread faster and with viewer messages if
distributed over disjoint paths. Thus, we spread subscription messages
(not publications) over the spanning tree built to construct the
virtual ring. In the spanning tree layer subscription messages are
distributed through broadcasts. For maintenance of $\fwdTableWOUTv$
messages of the form \mbox{\textsc{Sub}$\langle r, C_S, P \rangle$}
are distributed with period~$\delta_S$. The spanning tree layer
provides an interface \mbox{\sendsub[Message msg]}, which is used by
to the pub/sub layer to send \textsc{Sub} messages. Hence, the virtual
ring layer is bypassed. The spanning tree layer acts as a filter for
the broadcasts. A subscription message sent by a node in the tree is
received by parent and child nodes only. Physically it can be received
by other nodes too, but these disregard such messages.

To avoid multiple delivery of \textsc{Sub} messages, they contain the
previous sender $r$ of the message, initially $r=\bot$. $C_S$ contains
the identifiers of the subscribed channels and $P$ all positions of
subscriber $s$, i.e., $P = Pos(s)$. Distributing the set of channels a
node has subscribed to in one message, instead of sending one message
per channel (as in~\cite{Siegemund_VR:2015}), reduces the number of
sent \textsc{Sub} messages by a factor of approximately $n_c$. This reduces the network load and thus, the possibility
for message collisions.

If a \textsc{Sub} message from a subscriber~$s$, forwarded by a
node~$u$, is received by a node~$v$, then $\fwdTable$ is updated using
\mbox{$UpdSn(c, SP)$}: If there exists a position $p_i \in P$, which
is ccw closer than the currently stored next subscriber $ns$ values in
$\fwdTable$, then it is replaced by $p_i$ for a given channel $c$
(details in Algorithm~\ref{alg:unsub}). E.g., if $Pos(v) = \langle
5,12,18\rangle$, $\fwdTable=\langle 14,14,20 \rangle$, and the new
subscriber positions are $Pos(s) = \langle 3,7\rangle$, then the
updated routing structure is \mbox{$\fwdTable = \langle 7,14,20
  \rangle$}, because position~7 is closer to position~5 than~14.

Before forwarding a message from a node~$u$ the parameter~$r$ is
altered by the forwarding node~$v$, i.e., $r := u$. If a node~$w$
receives a message with \mbox{$r=w$}, then $w$ discards the message.
This ensures that a node does not resend a previously send
\textsc{Sub} message. Leaves of the tree and subscribers do not
forward messages, they only update their routing structure
$\fwdTable$. Algorithm~\ref{alg:subscribeNew} describes the handling
of subscription messages. 

Figure~\ref{fig:exmpSubRingFULL} shows an example, which is kept
simple to increase the lucidity. It shows the subscription
distribution for a single channel in a line topology. The according
virtual ring which does not have any shortcuts is depicted as well.
When node~$a$ subscribes for the first time, node~$c$ already is a
subscriber. Node~$a$ broadcasts the initial \textsc{Sub}$\langle \bot,
\langle c\rangle,\langle 3,9\rangle \rangle$ message. Node~$c$ does
not forward it because it is a subscriber itself. Node~$d$ and~$b$
forward the subscription and change the variable~$r$ accordingly. As a
leaf, node~$f$ updates its routing structure but does not forward the
message. The changes of $\fwdTableWOUTv$ induced by the subscription
of node~$a$ are depicted in \ref{fig:exmpSubRingFULL} (middle).
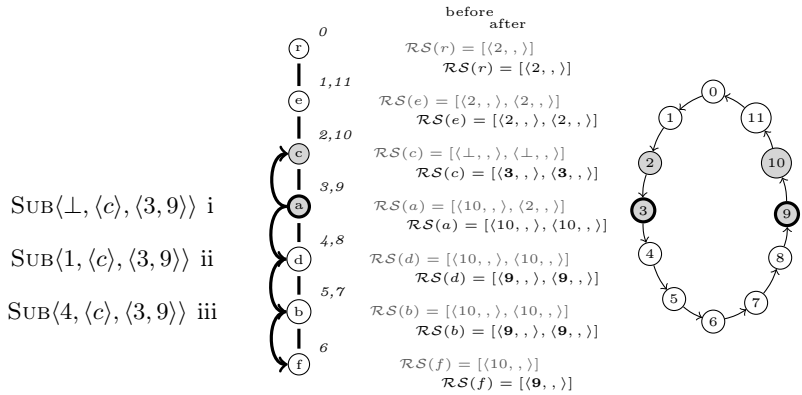
\begin{figure}[h]
\centering
\begin{tikzpicture}
\tikzset{
    mynode/.style={rectangle,align=center, rounded corners,draw=black, top color=white, bottom color=gray!45!black!30,	inner sep=0.2em, minimum size=1.6em, minimum width=9em, text centered},
    emptynode/.style={rectangle,align=center, rounded corners,draw=white, top color=white, bottom color=white,
    					thin, inner sep=0.2em, minimum size=1.6em, minimum width=9em, text centered},
    myarrow/.style={<->, >=latex', shorten >=1pt, thick},
    mylabel/.style={text width=7em, text centered},
    dotA/.style={circle, draw=black, font=\tiny, inner sep=0.5mm}, %
    pub/.style={circle, draw=black, fill=black!60!gray, text = white, font=\tiny, inner sep=0.5mm}, %
    subT/.style={circle, draw=black, fill=black!10!gray!30, font=\tiny, inner sep=0.5mm}, %
	subO/.style={circle, draw=black, very thick, fill=black!10!gray!30, font=\tiny, inner sep=0.5mm}, %
    edgeA/.style={shorten >= 0.6mm, shorten <= 0.6mm, dashed}, %
    edgeChosen/.style={shorten >= 0.6mm, shorten <= 0.6mm, very thick}, %
    edgeTree/.style={shorten >= 0.6mm, shorten <= 0.6mm, thin}, %
    dotVR/.style={circle, draw=black, font=\tiny, inner sep=0.5mm}, %
    edgeVR/.style={shorten >= 0.2mm, shorten <= 0.2mm, bend left},%
    Old/.style={black!30!gray},%
    connecto/.style={shorten >= 0.5mm, shorten <= 0.5mm}%

}  
\node[dotA, label={[font=\tiny]20:\textit{0}}] (d0) at (-1.4684,10.7017) {r};
\node[dotA, label={[font=\tiny]20:\textit{1,11}}] (d5) at (-1.4684,10.0017) {e};
\node[subT, label={[font=\tiny]20:\textit{2,10}}] (d3) at (-1.4684,9.3017) {c};
\node[subO, label={[font=\tiny]20:\textit{3,9}}] (d1) at (-1.4684,8.6017) {a};
\node[dotA, label={[font=\tiny]20:\textit{4,8}}] (d4) at (-1.4684,7.9017) {d};
\node[dotA, label={[font=\tiny]20:\textit{5,7}}] (d2) at (-1.4684,7.2017) {b};
\node[dotA, label={[font=\tiny]20:\textit{6}}] (d7) at (-1.4684,6.5017) {f};

\draw[edgeChosen] (d0) -- (d5);
\draw[edgeChosen] (d5) -- (d3);
\draw[edgeChosen] (d3) -- (d1);
\draw[edgeChosen] (d1) -- (d4);
\draw[edgeChosen] (d4) -- (d2);
\draw[edgeChosen] (d2) -- (d7);

\path[->] (d1)  edge [very thick, bend left = 90] (d3);
\path[->] (d1)  edge [very thick, bend right = 90] (d4);
\path[->] (d4)  edge [very thick, bend right = 90] (d2);
\path[->] (d2)  edge [very thick, bend right = 90] (d7);

\node[font=\footnotesize] (t0) at (-3.95,8.59) {\textsc{Sub$\langle\bot,
\langle c\rangle,
\langle 3,9\rangle\rangle$}$_{}$ i};
\node[font=\footnotesize] (t0) at (-3.95,7.89) {\textsc{Sub$\langle 1,
\langle c \rangle,
\langle 3,9 \rangle \rangle $}$_{}$ ii};
\node[font=\footnotesize] (t0) at (-3.95,7.19) {\textsc{Sub$\langle 4,
\langle c \rangle,
\langle 3,9 \rangle \rangle$}$_{}$ iii};

\tikzstyle{convex1}=[color=gray!30,];
\node[dotVR, label={[font=\tiny]90:$ $}] (v0) at (4.04,10.13) {0};
\node[dotVR,label={[font=\tiny]180:$ $}] (v1) at (3.47,9.78) {1};
\node[subT, label={[font=\tiny]179:$ $}] (v2) at (3.2,9.18) {2};
\node[subO, label={[font=\tiny]180:$ $}] (v3) at (3.11,8.55) {3};
\node[dotVR, label={[font=\tiny]180:$ $}] (v4) at (3.2,7.97) {4};
\node[dotVR, label={[font=\tiny]180:$ $}] (v5) at (3.52,7.37) {5};
\node[dotVR, label={[font=\tiny]270:$ $}] (v6) at (4.04,7.07) {6};

\node[Old] at (0.8,10.7) {\tiny $\mathcal{RS}(r)=[\langle2,,\rangle]$};
\node[Old] at (0.8,10) {\tiny $\mathcal{RS}(e)=[\langle2,,\rangle,\langle2,,\rangle]$};
\node[Old] at (0.8,9.3) {\tiny $\mathcal{RS}(c)=[\langle\bot,,\rangle,\langle\bot,,\rangle]$};
\node[Old] at (0.8,8.6) {\tiny $\mathcal{RS}(a)=[\langle10,,\rangle,\langle2,,\rangle]$};
\node[Old] at (0.8,7.9) {\tiny $\mathcal{RS}(d)=[\langle10,,\rangle,\langle10,,\rangle]$};
\node[Old] at (0.8,7.2) {\tiny $\mathcal{RS}(b)=[\langle10,,\rangle,\langle10,,\rangle]$};
\node[Old] at (0.8,6.5) {\tiny $\mathcal{RS}(f)=[\langle10,,\rangle]$};

\node[] at (1.3,10.44) {\tiny $\mathcal{RS}(r)=[\langle2,,\rangle]$};
\node[] at (1.3,9.74) {\tiny $\mathcal{RS}(e)=[\langle2,,\rangle,\langle2,,\rangle]$};
\node[] at (1.3,9.04) {\tiny $\mathcal{RS}(c)=[\langle\textbf{3},,\rangle,\langle\textbf{3},,\rangle]$};
\node[] at (1.3,8.34) {\tiny $\mathcal{RS}(a)=[\langle10,,\rangle,\langle10,,\rangle]$};
\node[] at (1.3,7.64) {\tiny $\mathcal{RS}(d)=[\langle\textbf{9},,\rangle,\langle\textbf{9},,\rangle]$};
\node[] at (1.3,6.94) {\tiny $\mathcal{RS}(b)=[\langle\textbf{9},,\rangle,\langle\textbf{9},,\rangle]$};
\node[] at (1.3,6.24) {\tiny $\mathcal{RS}(f)=[\langle\textbf{9},,\rangle]$};

\node[dotVR] (v7) at (4.61,7.32) {7};
\node[dotVR] (v8) at (4.93,7.92) {8};
\node[subO] (v9) at (5.02,8.5) {9};
\node[subT] (v10) at (4.88,9.18) {10};
\node[dotVR] (v11) at (4.61,9.78) {11};

\node[] (empty) at (3.62,6.3) {};

\path[->] (v0)  edge [bend right = 10] (v1);
\path[->] (v1)  edge [bend right = 10] (v2);
\path[->] (v2)  edge [bend right = 10] (v3);
\path[->] (v3)  edge [bend right = 10] (v4);
\path[->] (v4)  edge [bend right = 10] (v5);
\path[->] (v5)  edge [bend right = 10] (v6);
\path[->] (v6)  edge [bend right = 10] (v7);
\path[->] (v7)  edge [bend right = 10] (v8);
\path[->] (v8)  edge [bend right = 10] (v9);
\path[->] (v9)  edge [bend right = 10] (v10);
\path[->] (v10)  edge [bend right = 10] (v11);
\path[->] (v11)  edge [bend right = 10] (v0);

\node[] at (0.8,11.2) {\tiny before};
\node[] at (1.31,11.04) {\tiny after};




\end{tikzpicture}
\caption{Node~$a$ with positions 3 and 9 subscribes for the first time to channel $c$.}
\label{fig:exmpSubRingFULL}
\end{figure}

\newcommand{\reqRenewalC}{reqRenewalC}

\renewcommand\topfraction{0.85} 
\renewcommand\bottomfraction{0.85} 
\setlength\multicolsep{0pt}
\algrenewcommand\algorithmicindent{0.8em}%
\begin{algorithm}
\caption{Subscribing -- pub/sub Layer}
\label{alg:subscribeNew}
\begin{tabular*}{\textwidth}{@{}l@{~}l@{~}l}
Constants:& $\deltaS$ & resubscribe period (leasing period) \\
Variables:& $C_S$  & set of subscribed channels\\ 
& $\reqRenewalC$  & set of channels to be broadcasted\\
Functions:& $UpdSn(c, SP)$ & updates table $\fwdTable$ with positions P\\
Spanning tree layer API:& \sendsub[$\textsc{Msg}$] ~& broadcasts message $\textsc{Msg}$\\
& \textit{numChildren()} ~& returns number children in the tree\\
\end{tabular*}
\hrule
\begin{multicols}{2}
\begin{algorithmic}  
	\Function{\subscribe[c]}{}
		\If{($c \not \in C_S$)}
		\State{$C_S.add(c)$}	 
		\State \textsc{timer\_sub}.set(0)
		\EndIf
	\EndFunction
	\State
	\State Expiration of timer \textsc{timer\_sub}: %
	\State \textsc{timer\_sub}.set($\deltaS$)
	\State \sendsub[\textsc{Sub}$\langle \bot,C_S,P \rangle$]
	\State Upon $v$'s reception of $\textsc{Sub} \langle r,C,P\rangle$ from $u$
	\If{($r = v$)} 
	\State \Return
	\EndIf
	
	\ForAll{$c \in C_S$}
		\State $UpdSn(c, SP)$;
	\EndFor
	\State $C := C \setminus C_{S}$;
	\If{($ C \neq \emptyset ~ \wedge$ \textit{numChildren()} > 0)}
	\State{
		\sendsub[$\textsc{Sub} \langle u, C, P \rangle$]  
	}
	\EndIf
\end{algorithmic}
\end{multicols}
\end{algorithm}

\subsection{Publications}
\label{sec:pubParallel}



Publication messages need to be routed to subscribers only. Hence,
shortcuts can be used to skip non-subscribing nodes on the virtual
ring. Furthermore, publications of nodes with multiple positions can
be distributed concurrently over different paths. The following
propositions are tied to the fact that the virtual ring is built upon
a tree. Under a different scheme the routing still works, but some
properties, e.g., that each subscriber receives a publication only
once, are not guaranteed anymore.


\subsubsection*{Concurrent Routing.}
To explain publication routing on the virtual ring and the faced
challenges when routing messages concurrently we recap tree-based
routing. Each node maintains a routing table to identify branches
where at least one subscriber is present. A publisher distributes
messages into all such branches concurrently. The same reasoning is
conducted by forwarding nodes, while avoiding to send messages back to
previous senders. Trees are cycle free, hence, a publication is
delivered once per subscriber. In the virtual ring, shortcuts
introduce cycles. To avoid message duplication the concept of
\textit{routing into a branch} is transferred to the virtual ring.
Therefore, the \textit{end of a branch} is defined.



Nodes have multiple positions on the virtual ring, one for each
neighbor in the tree. Hence, sending a message from every position in
\mbox{$Pos(v) =\langle p_1, \dots, p_s\rangle$} to \mbox{$p_1+1,
  \dots, p_s+1$}, respectively is the equivalent of a tree node
sending into all branches. In the routing structure $\fwdTableWOUTv$
the next subscriber for each position is stored, this reflects a
node's understanding that a subscriber exists in a certain tree
branch. Therefore, if a publisher knows that there is at least one
subscriber in an interval $\mathcal{I} = [p_i, p_{i+1})$ for a given
channel $c$
then it sends a publication to a \textit{goal} position in
$\mathcal{I}$. The \textit{goal} position is the ccw closest one-hop
reachable position to the next subscriber in $\mathcal{I}$, i.e.,
$goal$ is either the next position on the virtual ring or a position
reachable by a shortcut.

Received publications are delivered to all nodes subscribing to the
message's channel. Regardless of the delivery, publications are
forwarded to ensure that all subscribers receive it. Forwarding of
publications is restricted to the interval they are sent into. To
avoid sending messages beyond interval borders the endpoint
$\borderPos$ of each $\mathcal{I}$ is attached to publication
messages: \mbox{\textsc{Pub}$\langle goal,ep,c,data \rangle$.} Where
$\borderPos$ is the right endpoint of \mbox{$\mathcal{I} = [p_i,
  p_{i+1})$}, i.e., $\borderPos = p_{i+1}$. A message is neither
routed to $\borderPos$ nor to a position beyond it. Parameters $goal$
and $ep$ are updated at every forwarding node. Parameter $data$
represents the payload.

The start position of an interval is the current position of a node
and the endpoint position is defined by the ccw next position of the
same node. Multiple delivery of a publication to nodes with multiple
positions in an interval is avoided as shown in
Lemma~\ref{lem:nesting}. 

\begin{lemma}
  The positions of nodes on the virtual ring are never interlaced.
  That is, a node~$v$ may have a position on the virtual ring which is
  followed by a node~$w$'s position, once another position of~$v$
  appears there cannot be a further position of~$w$.
\label{lem:nesting}
\end{lemma}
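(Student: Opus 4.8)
The plan is to exploit the fact, established just before this statement, that $R$ is obtained from a depth-first (Euler) traversal of the spanning tree $T$ underlying the virtual ring: every edge of $T$ is traversed exactly twice — once downward and once upward — and a node $v$ occurs exactly as often as it has neighbours in $T$. I would first restate the claimed property in the symmetric form that is easiest to attack: the positions of two distinct nodes $v$ and $w$ are \emph{non-crossing}, i.e., there are no four positions occurring in cyclic order $a,b,c,d$ around the ring with $a,c\in Pos(v)$ and $b,d\in Pos(w)$. An interlacing as forbidden by the lemma is exactly such a crossing quadruple, so proving non-crossing for every pair suffices.

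The heart of the argument is a structural description of how the occurrences of a single node $v$ cut the ring. Fix $v$ and root $T$ at the traversal's start node. The $\deg_T(v)$ occurrences of $v$ split the cyclic sequence $R$ into the same number of arcs. I would show that each such arc is filled entirely by positions of nodes belonging to a single connected component of $T-v$, and that distinct arcs correspond to distinct components. The key observation is that $v$ is a cut vertex separating the components of $T-v$: once the traversal leaves $v$ across one incident tree edge into a neighbour $x$, it cannot reach any other component without passing through $v$ again. Between two \emph{consecutive} occurrences of $v$, by definition $v$ does not recur, so the walk stays inside the one component containing $x$; and because the edge $\{v,x\}$ is used exactly twice (down to enter, up to leave), that component is entered and left exactly once and therefore occupies one contiguous arc.

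From this description the lemma is immediate. Any node $w\neq v$ lies in exactly one component of $T-v$, so all of $Pos(w)$ is confined to the single arc between the two consecutive occurrences of $v$ associated with that component. Hence $w$'s positions can never sit on both sides of an occurrence of $v$: once a position of $v$ is followed by a position of $w$ and then the next position of $v$ is reached, every remaining position of $w$ has already been passed, which is precisely the non-interlacing asserted. Since $v$ and $w$ were arbitrary, this yields non-crossing for all pairs.

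I expect the main obstacle to be the structural claim of the middle paragraph rather than the reduction around it: I must argue rigorously, on the \emph{cyclic} sequence, that ``consecutive occurrences of $v$'' are well defined and that each component of $T-v$ spans exactly one arc, taking care of the wrap-around arc containing the root and of the degenerate case where $v$ is a leaf (a single arc comprising all other positions, for which the claim is trivial). This is where the two facts about the Euler traversal — each edge used exactly twice, and $v$ being a cut vertex of $T$ — must be combined; everything else is bookkeeping.
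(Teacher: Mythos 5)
Your proof is correct and takes essentially the same route as the paper: the paper's one-line argument asserts precisely the nesting property you establish, namely that the virtual ring is an Euler tour of the tree, so all positions of a branch (a component of $T-v$) lie in a single arc between two consecutive occurrences of $v$. Your version just makes explicit what the paper leaves implicit — the cut-vertex reasoning, the fact that each tree edge is traversed exactly twice, and the treatment of the wrap-around arc containing the root.
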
\vspace*{-2mm}
\begin{proof}
The virtual ring is derived from a tree. A node has multiple positions if and only if it has children in the tree. All positions of a child branch are therefore nested in between two of its parents positions.
\end{proof}

As Lemma~\ref{lem:nesting} suggests, within a nodes's interval
$\mathcal{I}$ may be further intervals of other nodes. For the routing
this means, that a node forwarding a publication applies the same
reasoning as a publisher to determine how to forward messages. In the
tree this corresponds to branching. Each branch containing a
subscriber leads to an additional message sent concurrently. The
analog in the virtual ring is as follows: Each subscriber in the
interval \mbox{$\mathcal{I}_{f} = [p_i, p_{i+1})$} with $p_{i+1}$ ccw
in between $p_i$ and $ep$ forwards the \textsc{Pub} message. That is,
in the \textit{subsection} of the virtual ring bounded by the current
node position and the received endpoint position $ep$, independent
concurrent routing is conducted. Therefore, the parameters of the
\textsc{Pub} message are updated. The endpoint becomes $p_{i+1}$ if
$p_{i+1}$ is ccw between $p_i$ and $ep$ otherwise it stays unchanged.

Algorithm~\ref{alg:publishAugment} shows the handling of publications
and the calculation of associated endpoints. When a node generates a
publication with content $data$, then the \textit{handlePub()}
function is called, i.e., message \textsc{Pub}$\langle
P[0],P[0],c,data \rangle$ is sent.

\begin{theorem}
\label{tho:pubOnce}
In error-free phases subscribers receive \textsc{Pub} messages exactly once. 
\end{theorem}
\begin{proof}
  Once a position receives a publication message it is distributed
  over all possible positions with updated \textit{$\borderPos$}. This
  is equivalent to routing messages into branches of the underlying
  tree. Since parameter $\borderPos$ of a publication is closer or
  equal to the next position of the same node when the message is
  forwarded, it is assured that no further position of the same node
  receives a message again. For a particular position of a node,
  routing is conducted using a tree edge or a shortcut. A shortcut can
  only be used if $\fwdTable$ ensures that no subscriber is skipped.
  Hence, in the range of the tree between the position the shortcut
  leads to and the tree position which would be used instead
  (incremented current position) no subscriber
  exists.
\end{proof}

To illustrate the advantage of using shortcuts consider the topology
and the virtual tree graph in
Fig.~\ref{fig:exampleBackForth} and \ref{fig:VRrouting_homePos}. In
pure tree routing a message from node $c$ to $d$ is sent via node $e$.
With $\mathcal{PSVR}$ the direct shortcut between node~$c$ and~$d$ is
taken. The table in Fig.~\ref{fig:examplePos} shows the \textit{next
  subscriber} and the $goal$ positions. The next
subscriber is the according entry in $\fwdTableWOUTv{}$ for the stated
position. The publisher initiates two  delivery paths, one
for each position, i.e., for each interval. In the virtual ring in
Fig.~\ref{fig:VRrouting_homePos} these subsections are depicted as
light gray areas. One subsection starts at position~6 the other at~8
while $ep$ is the start position of the next subsection, respectively.
Publisher $d$ sends messages \mbox{\textsc{Pub}$\langle 7,8,c,data
  \rangle$} from position 6 and \mbox{\textsc{Pub}$\langle 2,6,c,data
  \rangle$} from position 8. Position~2 forwards the publication in
one interval with the borders $[2,4)$ with the message
\mbox{\textsc{Pub}$\langle 3,4,c,data \rangle$}. In
Fig.~\ref{fig:VRrouting_homePos} this is represented by the dark gray
area. In the interval $[4,6)$ no subscriber exists, hence, no message
is sent into the respective subsection. 

Figure~\ref{fig:expPub} shows an execution of
Algorithm~\ref{alg:publishAugment}. A virtual ring with shortcuts is
depicted. Furthermore, the table next to the ring shows the routing
table $\fwdTable$ for each node and a single channel, with each of its
own positions ($Pos$). A table cell represents one node, e.g., the
first node has the positions 0, 8, 12, and 14. Publishers are depicted
as black circles, and subscribers as gray ones.

\subsubsection*{Detailed Example for Parallel Publication Routing}

In Fig.~\ref{fig:expPub} two publishers exist, one at positions 1, 3, and 7 (referred to as node~$a$) and at position 15 and 19 (node~$b$). Both publishers send one message for each interval $[p_i, p_{i+1})$ a subscriber is present (see schedule for \circled{a} and \circled{b}).
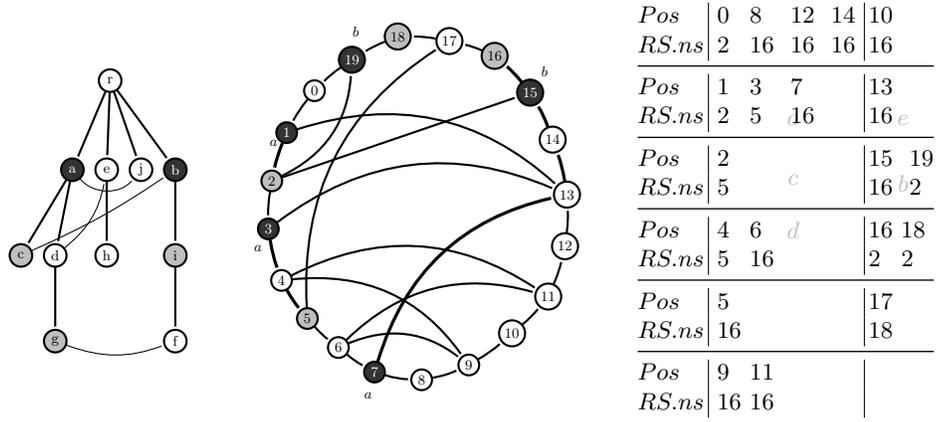
\begin{figure}[h]	
\begin{minipage}[c]{.25\linewidth}
\begin{tikzpicture}[
	scale=0.76, every node/.style={scale=0.76},
	-,>=stealth',shorten >=1pt,auto,node distance=3cm,
  thick,main node/.style={circle,draw,  inner sep=2, minimum size=5mm}, scale=0.6, every node/.style={scale=0.6}, hidden/.style={circle}, 
  publisher node/.style={circle, draw, fill=black!60!gray, text=white, inner sep=2, minimum size=5mm}, 
  subscriber node/.style={circle,draw, fill=gray!50!, inner sep=2, minimum size=5mm}]

\node[main node] (17) at (2,-7.5) {f};
\node[subscriber node] (16) at (2,-5) {i};
\node[publisher node] (15) at (2,-2.5) {b};

\node[main node] (13) at (1,-2.5) {j};

\node[main node] (10) at (0,-5) {h};
\node[main node] (9) at (0,-2.5) {e};
			
\node[subscriber node] (5) at (-1.5,-7.5) {g};
\node[main node] (4) at (-1.5,-5) {d};
\node[subscriber node] (2) at (-2.5,-5) {c};
\node[publisher node] (1) at (-1,-2.5) {a};
\node[main node] (0) at (0.1,0.1) {r};

  \path[every node/.style={font=\sffamily\small}]
  (0) edge[] node {} (1) edge[] node {} (9) edge[] node {} (13) edge[] node {} (15)
  (1) edge[] node {} (2)
  (2) edge[bend right = 5, thin] node {} (15)
  (1) edge[] node {} (4)
  (4) edge[] node {} (5)
  (4) edge[bend right = 20, thin] node {} (9)
  (5) edge[bend right = 20, thin] node {} (17)
  (9) edge[] node {} (10)
  (15) edge[] node {} (16)
  (16) edge[] node {} (17)
  (1) edge[bend right = 50, thin] node {} (13)
   	;
\end{tikzpicture}
\end{minipage}
\begin{minipage}[c]{.4\linewidth}
\begin{tikzpicture}[
	scale=0.76, every node/.style={scale=0.76},
	-,>=stealth',shorten >=1pt,auto,node distance=3cm,
  thick,main node/.style={circle,draw, inner sep=2}, scale=0.6, every node/.style={scale=0.6}, hidden/.style={circle}, publisher node/.style={circle, draw, fill=black!60!gray, text=white, inner sep=2}, subscriber node/.style={circle,draw, fill=gray!50!, inner sep=2}]

\node[publisher node] (19) at (-1.0186,4.4351) {19};
\node[subscriber node] (18) at (0.3181,5.1129) {18};
\node[main node] (17) at (1.8243,4.9811) {17};
\node[subscriber node] (16) at (3.1422,4.548) {16};
\node[publisher node] (15) at (4.1778,3.4749) {15};
\node[main node] (14) at (4.8344,2.104) {14};
\node[main node] (13) at (5.2512,0.5) {13};
\node[main node] (12) at (5.1944,-1.0061) {12};
\node[main node] (11) at (4.7049,-2.4746) {11};
\node[main node] (10) at (3.6506,-3.5478) {10};
\node[main node] (9) at (2.3891,-4.4703) {9};
\node[main node] (8) at (1.0147,-4.9034) {8};			
\node[publisher node] (7) at (-0.3597,-4.6774) {7};
\node[main node] (6) at (-1.414,-3.962) {6};
\node[subscriber node] (5) at (-2.3177,-3.1147) {5};
\node[main node] (4) at (-3.0708,-2.0039) {4};
\node[publisher node] (3) at (-3.4552,-0.5) {3};
\node[subscriber node] (2) at (-3.3552,0.9) {2};
\node[publisher node] (1) at (-2.9202,2.3076) {1};
\node[main node] (0) at (-2.1106,3.5314) {0};

  \path[every node/.style={font=\sffamily\small}]
  (0) edge[bend right=5] node {} (1)
   (1) edge[very thick,bend right=5] node {} (2)
   edge[bend left] node {} (13)    
    (2) edge[bend right=5] node {} (3)
        edge[] node {} (15)     
        edge[bend right] node {} (19)     
    (3) edge[bend right=5, very thick] node {} (4)
    edge[bend left] node {} (13)    
    (4) edge[bend right=5, very thick] node {} (5)
           edge[bend left] node {} (9)     
           edge[bend left] node {} (11)     
	(5) edge[bend right=5] node {} (6)
	edge[bend left] node {} (17)     
           
	(6) edge[bend right=5] node {} (7)
	edge[bend left] node {} (9)     
           edge[bend left] node {} (11)     
	(7) edge[bend right=7] node {} (8)
	edge[bend left, very thick] node {} (13)    
	(8) edge[bend right=7] node {} (9)
	(9) edge[bend right=5] node {} (10)
	(10) edge[bend right=5] node {} (11)
	(11) edge[bend right=5] node {} (12)
	(12) edge[bend right=5] node {} (13)
	(13) edge[bend right=5, very thick] node {} (14)
	(14) edge[bend right=5,very thick] node {} (15)
	(15) edge[bend right=5,very thick] node {} (16)
	(16) edge[bend right=5] node {} (17)
	(17) edge[bend right=5] node {} (18)
	(18) edge[bend right=5] node {} (19)
	(19) edge[bend right=5] node {} (0)
	;

\node[hidden] (ia) at (0.8,-4.2) {};
\node[hidden] (ib) at (-2.4088,-2) {};
\node[hidden] (iia) at (4.288,-2.1) {};
\node[hidden] (iib) at (3.2,-3.3168) {};
\node[hidden] (iiia) at (2,4.2) {};
\node[hidden] (iiib) at (3.1,3.7) {};
\node[hidden] (iva) at (-1.3336,3.5) {};
\node[hidden] (ivb) at (0,4.1) {};

\node[hidden] (1) at (-3.3,2) {$a$};
\node[hidden] (1) at (-3.7584,-1.0584) {$a$};
\node[hidden] (1) at (-0.5456,-5.3712) {$a$};

\node[hidden] (1) at (4.6,4.1) {$b$};
\node[hidden] (1) at (-0.9,5.2416) {$b$};
    ;


\tikzstyle{convex1}=[color=gray!15,];
\tikzstyle{convex2}=[color=gray!25,];
\tikzstyle{convex3}=[color=gray!45,];
\tikzstyle{convex4}=[color=gray!55,];

\end{tikzpicture}
\end{minipage}
\begin{minipage}[c]{.30\linewidth}
\begin{tikzpicture}
    \matrix[ampersand replacement=\&] {
        \node (species1) [font= \footnotesize] {
        \begin{tabular}{@{}l | @{~}l@{}l@{~}l@{~}l@{~} | l@{~}l@{}}
         $Pos$&$0$&~$8$&~$12$&~$14$&$10$&~\\ 
        $\fwdTableWOUTv.ns$&$2$&~$16$&~$16$&~$16$&$16$&~\\
        \midrule
        $Pos$&$1$&~$3$&~$7$ & &$13$&~\\ 
        $\fwdTableWOUTv.ns$&$2$&~$5$&~$16$& &$16$&~\\
        \midrule
        $Pos$&$2$&~&~& &$15$&~$19$\\ 
        $\fwdTableWOUTv.ns$&$5$&~&~& &$16$&~$2$\\
        \midrule
        $Pos$&$4$&~$6$&~&  &$16$&$18$\\ 
        $\fwdTableWOUTv.ns$ &$5$&~$16$&~& &$2$&$2$\\
        \midrule
        $Pos$&$5$ &~&~& &$17$&\\ 
        $\fwdTableWOUTv.ns$ &$16$&~&~& &$18$&\\
        \midrule
        $Pos$&$9$&~$11$&~& & &\\ 
        $\fwdTableWOUTv.ns$ &$16$&~$16$&~& & &\\
        \bottomrule
        \end{tabular}

        };
        \& 
        \\        
};
\begin{pgfonlayer}{background}
\node[text=gray!50] at (0.1,1.25)  {$a$};
\node[text=gray!50] at (1.56,0.4)  {$b$};
\node[text=gray!50] at (0.1,0.45)  {$c$};
\node[text=gray!50] at (0.1,-0.22)  {$d$};
\node[text=gray!50] at (1.56,1.25)  {$e$};
\end{pgfonlayer}
\end{tikzpicture}


\ra{1}
\end{minipage}
\caption{Publication routing example on virtual ring. Black and gray positions are publishers and subscribers, respectively.}
\label{fig:expPub}
\end{figure}

\begin{center}
\begin{minipage}[t]{.48\linewidth}
\centering
\ra{1.1}
	\begin{tabular}{@{}l@{}clll@{}}
	& from &  & to & $\borderPos$\\
	\cline{2-5}
\circled{$a$} & 	1 & -> & 2 & 3\\
 & 	3 & -> & 4 & 7\\
& 	7 & -> & 13 & 1\\
	\cline{2-5}
	\end{tabular}
\end{minipage}
\begin{minipage}[t]{.48\linewidth}
\centering
\ra{1.1}
	\begin{tabular}{@{}l@{}clll@{}}
&	from &  & to & $\borderPos$\\
	\cline{2-5}
\circled{$b$}&	15 & -> & 16 & 19\\
 &	19 & -> & 2  & 15\\
&&&&\\
	\cline{2-5}
	\end{tabular}
\end{minipage}
\end{center} 
		
When a subscriber receives a publication, it delivers the message, then it evaluates if the message has to be forwarded. If one or more subscribers exist within the received $\borderPos$, and if the calculated goal position does not lie beyond any of its other positions or the received $\borderPos$, then a new $\borderPos$ is calculated and the message is altered before it is sent.

We focus on the publication from node~$a$. Positions 2, 4, and 13 received the publication and forward the message according to the schedule for \circled{$c$} \circled{\small$d$} \circled{$e$}.

\begin{center}
\begin{minipage}[t]{.32\linewidth}
\centering
\ra{1.1}
	\begin{tabular}{@{}l@{}c@{}l@{}l@{}r@{}}
&	from &  & to &~$\borderPos$\\
	\cline{2-5}
\circled{$c$}&	2 & not &  & \\
&&&&\\
	\cline{2-5}
	\end{tabular}
\end{minipage}
\begin{minipage}[t]{.32\linewidth}
\centering
\ra{1.1}
	\begin{tabular}{@{}l@{}c@{}l@{}l@{}r@{}}
&	from &  & to&~$\borderPos$\\
	\cline{2-5}
\circled{\small$d$}&	4 & -> & 5 & 6\\
 &	6 & not &   & \\
	\cline{2-5}
	\end{tabular}
\end{minipage}
\begin{minipage}[t]{.32\linewidth}
\centering
\ra{1.1}
	\begin{tabular}{@{}l@{}c@{}l@{}l@{}r@{}}
&	from &  & to &~$\borderPos$\\
	\cline{2-5}
\circled{$e$}&	13 & -> & 14 & 1\\
 &	&  &  & \\
	\cline{2-5}
	\end{tabular}
\end{minipage}
\end{center} 

Position 2 does not forward the message since the $\borderPos$ is~3. Position~4 sends a message to 5 which is not forwarded by~5 because the $\borderPos$ was changed from 7 to 6 which is another position from the node at position~4. Finally, the message from position~13 is forwarded.

At position~14 only one message is sent, the one with destination position 15, because the $\borderPos$ is still~1 and positions~8, 12, and~14 (all belong to the same node) have the same next subscriber position. Nevertheless, the $\borderPos$ is changed at position 14 to 0. Position 15 forwards to 16. 

At 16 the message is delivered. Position 18, which is the second position of the node at position 16, does not forward the publication, because the next subscriber, position 2, is beyond the current $\borderPos$ (0). Therefore, the publication is not forwarded any further, which is desired since all subscribers got the publication.

In this example seven messages are sent to deliver the publication. With the algorithm in~\cite{Siegemund_VR:2015} twelve messages are necessary. With \PSVR positions 8 to 12, and position 17 are skipped.

\subsubsection*{Resolving drawbacks of~\cite{Siegemund_VR:2015}.} In
the related work section two shortcomings of~\cite{Siegemund_VR:2015}
concerning publications were mentioned. Firstly, \textit{nodes
  receive publications multiple times}. For the example in
Fig.~\ref{fig:Everything} this means that the path a \textsc{Pub}
message travels, starting at position~6 is: $\langle
(6),7,8,2,3,4\rangle$, i.e., $\langle (d),b,d,c,a,c\rangle$. Node~d
(positions 6 and 8) receives its previously published message in order
to forward it. Additionally, node~c receives the same publication
twice. As can be examined in Fig.~\ref{fig:VRrouting_homePos}, with
$\mathcal{PSVR}$ two messages travel: $\langle (6),7\rangle\langle
(8),2,3\rangle$, i.e., $\langle (d),b\rangle\langle (d),c,a\rangle$.
This is a considerable improvement.

\begin{figure}
\begin{subfigure}[b]{0.17\linewidth}
\centering
	\begin{tikzpicture}
\tikzset{
    mynode/.style={rectangle,align=center, rounded corners,draw=black, top color=white, bottom color=gray!45!black!30,	inner sep=0.2em, minimum size=1.6em, minimum width=9em, text centered},
    emptynode/.style={rectangle,align=center, rounded corners,draw=white, top color=white, bottom color=white,
    					thin, inner sep=0.2em, minimum size=1.6em, minimum width=9em, text centered},
    myarrow/.style={<->, >=latex', shorten >=1pt, thick},
    mylabel/.style={text width=7em, text centered},
    dotA/.style={circle, draw=black, font=\tiny, inner sep=0.5mm}, %
    pub/.style={circle, draw=black, fill=black!60!gray, text = white, font=\tiny, inner sep=0.5mm}, %
    sub/.style={circle, draw=black, fill=black!10!gray!30, font=\tiny, inner sep=0.5mm}, %
    edgeA/.style={shorten >= 0.6mm, shorten <= 0.6mm, dashed}, %
    edgeChosen/.style={shorten >= 0.6mm, shorten <= 0.6mm, very thick}, %
    edgeTree/.style={shorten >= 0.6mm, shorten <= 0.6mm, thin}, %
    dotVR/.style={circle, draw=black, font=\tiny, inner sep=0.5mm}, %
    edgeVR/.style={shorten >= 0.2mm, shorten <= 0.2mm, bend left},%
    connecto/.style={shorten >= 0.5mm, shorten <= 0.5mm}%

}  
\node[dotA, label={[font=\tiny]20:\textit{0}}] (d0) at (-0.5,10) {r};
\node[sub, label={[font=\tiny]20:\textit{3}}] (d1) at (-1,7) {a};
\node[dotA, label={[font=\tiny]20:\textit{1,5,9}}] (d5) at (-0.5,9) {e};
\node[sub, label={[font=\tiny]130:\textit{2,4}}] (d3) at (-1,8) {c};
\node[sub, label={[font=\tiny]20:\textit{7}}] (d2) at (0,7) {b};
\node[pub, label={[font=\tiny]20:\textit{6,8}}] (d4) at (0,8) {d};

\draw[edgeChosen] (d0) -- (d5);
\draw[edgeChosen] (d5) -- (d3);
\draw[edgeChosen] (d5) -- (d4);
\draw[edgeTree] (d4) -- (d3);
\draw[edgeChosen] (d4) -- (d2);
\draw[edgeChosen] (d1) -- (d3);
\draw[edgeTree] (d1) -- (d2);
\end{tikzpicture}
	\caption{Graph}
	\label{fig:exampleBackForth}
\end{subfigure}
\begin{subfigure}[b]{0.35\linewidth}
\centering
	\usetikzlibrary{backgrounds}

\begin{tikzpicture}
\tikzset{
    mynode/.style={rectangle,align=center, rounded corners,draw=black, top color=white, bottom color=gray!45!black!30,	inner sep=0.2em, minimum size=1.6em, minimum width=9em, text centered},
    emptynode/.style={rectangle,align=center, rounded corners,draw=white, top color=white, bottom color=white,
    					thin, inner sep=0.2em, minimum size=1.6em, minimum width=9em, text centered},
    myarrow/.style={<->, >=latex', shorten >=1pt, thick},
    mylabel/.style={text width=7em, text centered},
    dotA/.style={circle, draw=black, font=\tiny, inner sep=0.5mm}, %
    pub/.style={circle, draw=black, fill=black!60!gray, text = white, font=\tiny, inner sep=0.5mm}, %
    sub/.style={circle, draw=black, fill=black!20!gray!30, font=\tiny, inner sep=0.5mm}, %
    edgeA/.style={shorten >= 0.6mm, shorten <= 0.6mm, dashed}, %
    edgeChosen/.style={shorten >= 0.6mm, shorten <= 0.6mm, very thick}, %
    edgeTree/.style={shorten >= 0.6mm, shorten <= 0.6mm, thin}, %
    dotVR/.style={circle, draw=black, fill=white, font=\tiny, inner sep=0.5mm}, %
    edgeVR/.style={shorten >= 0.2mm, shorten <= 0.2mm, bend left},%
    connecto/.style={shorten >= 0.5mm, shorten <= 0.5mm}%

}  
\node[dotVR] at (3,		10) (v0) {0};
\node[dotVR] at (2.3,	9.65) (v1) {1};
\node[sub, label={[font=\tiny]240:\textit{(ii) $ep$:4}}] at (2,		9) (v2) {2};
\node[sub] at (2,		8.2) (v3) {3};
\node[sub] at (2.3,	7.55) (v4) {4};
\node[dotVR] at (3,		7.2) (v5) {5};
\node[pub, label={[font=\tiny]30:\textit{(i) $ep$:8}}] at (3.7,	7.55) (v6) {6};
\node[sub] at (4,		8.2) (v7) {7};
\node[pub] at (4,		9) (v8) {8};
\node[dotVR] at (3.7,	9.65) (v9) {9};

\path[] (v0)  edge [bend right = 15] (v1);
\path[] (v1)  edge [bend right = 10] (v2);
\path[->] (v2)  edge [bend right = 10] (v3);
\path[] (v3)  edge [bend right = 10] (v4);
\path[] (v4)  edge [bend right = 15] (v5);
\path[] (v5)  edge [bend right = 15] (v6);
\path[->] (v6)  edge [bend right = 10] (v7);
\path[] (v7)  edge [bend right = 10] (v8);
\path[] (v8)  edge [bend right = 10] (v9);
\path[] (v9)  edge [bend right = 15] (v0);

\path[] (v3)  edge [] (v7);
\path[] (v2)  edge [bend left = 10,<-] node[font=\tiny,above]{\textit{(i) $ep$:6}} (v8) edge [bend left = 10] (v6);
\path[] (v4)  edge [bend left = 10] (v8) edge [bend right = 10] (v6);

\begin{scope}[on background layer]
\draw[gray!90, double=gray!10,double distance=4.8mm,smooth,line cap=round,tension=0.4] plot coordinates {(v6)(v7)};
\draw[gray!90, double=gray!10,double distance=4.8mm,smooth,line cap=round,tension=0.4] plot coordinates {(v8) (v9) (v0) (v1) (v2) (v3) (v4) (v5)};
\draw[gray!90, double=gray!65,double distance=4mm,smooth,line cap=round,tension=0.4] plot coordinates {(v2)(v3)};
\end{scope}

\end{tikzpicture}
	\caption{Virtual ring graph}
	\label{fig:VRrouting_homePos}
\end{subfigure}
    \begin{subfigure}[b]{0.4\columnwidth}
        \centering
		{\begin{tabular}{@{}l c c c cc cc ccc@{}}   	
		   	node				& r & a & b &\multicolumn{2}{c}{c} &\multicolumn{2}{c}{d} &\multicolumn{3}{c}{e}  \\
		    pos. 			& 0 & ~3~ & 7 & ~2 & 4~ & ~6 & 8~ & 1 & 5 & 9\\
		    next subscr.   & 2 & ~4~ & 2 & ~3 & 7~ & ~7 & 2~ & 2 & 7 & 2\\
		    goal    			& 1 & ~4~ & 8 & ~3 & 6~ & ~7 & 2~ & 2 & 6 & 0\\
		    \end{tabular} }
        \caption{Forwarding positions}\label{fig:examplePos}      
    \end{subfigure} 
\caption{Illustration of the forwarding process (Subscribers: gray;
  publisher: black)} 
   \label{fig:Everything}
\end{figure}
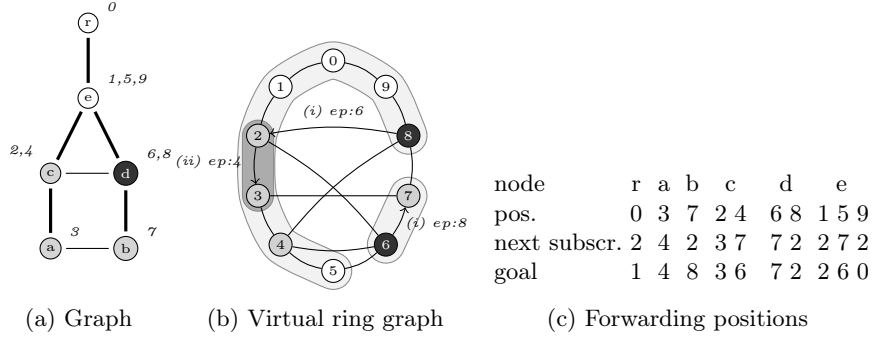

\begin{algorithm}[h]
\caption{Handling and forwarding of publications}
\label{alg:publishAugment}
{\scriptsize
\begin{tabular*}{\textwidth}{@{}ll@{}}
	\multicolumn{2}{@{}l}{\textbf{API provided by virtual ring layer (VR):} }\\
	~\getPosClosestTo[p, goal] 	  & \parbox[t]{8.2cm}{returns largest ccw position beyond \textit{p} and prior (or equal to) \textit{goal} within neighbor positions}\\
	~\sendRing[p, msg]& sends message $msg$ to position $p$\\
	~\isBetween[test, left, right]& \parbox[t]{8.2cm}{checks if position \textit{test} is in ccw ring segment bounded by positions \textit{left} and \textit{right}\\ note: \textit{isBetween}$(x,y,y)=$\textit{true} for arbitrary positions $x$ and $y$}\\
	~\textit{\textbf{deliver}}($data$) &  delivers the $data$ to the application	
\end{tabular*}		}
\hrule
\begin{multicols}{2}
\begin{algorithmic}
\algrenewcommand\algorithmicindent{0.5em}%
\Function{\publish[$c, data$]}{}
	\State{\textit{\textbf{handlePub}}($P[0],P[0],c,data$)}
\EndFunction	
\\
\begin{center}\rule[+1mm]{20mm}{.1pt}%
\end{center}
\State Upon reception of $\textsc{Pub} \langle curPos, ep, c, data\rangle$
\If{($c \in C_S$)}
	\State{\textit{\textbf{deliver}}($data$)}	
\EndIf
\State{\textit{\textbf{handlePub}($curPos, ep, c, data$)}}
\begin{center}\rule[+1mm]{20mm}{.1pt}%
\end{center}
\Function{\textit{\textbf{handlePub}($curPos, ep, c, data$)}}{}
\ForAll{$p \in P$}
\State{$nextS := \fwdTableWOUTv{}[indexOf(c)][indexOf(p)]$}
\State{$newEp := $\textit{\textbf{calcNewEP}}($p, ep$)}
\If{(\isBetween[$nextS, curPos, newEp$])} 
\State{$goal := $\getPosClosestTo[$p, nextS$]}
\State {\textit{\textbf{sendOnRing}}($goal$,  }
\State{~~\textsc{Pub}$\langle goal, newEp, c, data \rangle$)}
\EndIf
\EndFor
\EndFunction
\\
\Function{\textit{\textbf{calcNewEp}($p$, $maxEp$)}}{}
\State $i := indexOf(p)$
\State $epIndex := i+1 \bmod |P|$
\If{(\textit{\textbf{isBetween}}($P[epIndex], p, maxEp$))}
\State{\Return $P[epIndex]$}
\Else \State{\Return $maxEp$}
\EndIf
\EndFunction
\end{algorithmic}
\end{multicols}
\end{algorithm}


Secondly, \textit{publications travel further on the virtual ring as
  the last subscriber}. Consider an example where the next
subscriber~$p_w$ lies beyond a publisher~$p_v$, as depicted in
Fig.~\ref{fig:toFar}. In~\cite{Siegemund_VR:2015} a publication from
$p_v$ is forwarded until a node at position~$p_u$ can determine that
forwarding leads to routing the message past or to the original
publisher~$p_v$, then the node ceases forwarding. With
$\mathcal{PSVR}$, due to the definition of the end
position~$\borderPos$ and the knowledge of the next subscriber~$p_w$,
such a situation is recognized by the ccw \textit{last}
subscriber~$p_t$ before publisher~$p_v$. $p_t$ checks if the next
subscriber is between the current position and $\borderPos$. If this
is not the case, the message is not forwarded. Hence, in
Fig.~\ref{fig:toFar} four avoidable messages, starting at position~3
successively to position~7, are sent with~\cite{Siegemund_VR:2015}
compared to~$\mathcal{PSVR}$.

\begin{figure}[h]
\centering
\begin{tikzpicture}
\tikzset{
    mynode/.style={rectangle,align=center, rounded corners,draw=black, top color=white, bottom color=gray!45!black!30,	inner sep=0.2em, minimum size=1.6em, minimum width=9em, text centered},
    myarrow/.style={<->, >=latex', shorten >=1pt, thick},
    msgEdge/.style={-, >=latex', shorten >=3pt},
    mylabel/.style={text width=7em, text centered},
    dotA/.style={circle, draw=black, font=\tiny, inner sep=0.5mm}, %
    pub/.style={circle, draw=black, fill=black!60!gray, text = white, font=\tiny, inner sep=0.5mm}, %
    sub/.style={circle, draw=black, fill=gray!70, text = white, font=\tiny, inner sep=0.5mm}, %
    edgeA/.style={shorten >= 0.6mm, shorten <= 0.6mm, dashed}, %
    edgeChosen/.style={shorten >= 0.6mm, shorten <= 0.6mm, very thick}, %
    edgeTree/.style={shorten >= 0.6mm, shorten <= 0.6mm, thin}, %
    dotVR/.style={fill = white, inner sep=-2em, minimum size=3.8mm, circle, draw=black, font=\tiny, inner sep=0.5mm}, %
    edgeVR/.style={shorten >= 0.2mm, shorten <= 0.2mm, bend right=40, densely dotted},%
    connecto/.style={shorten >= 0.5mm, shorten <= 0.5mm}%

}

\node[] (v2) at (-0.6,7) {};
\node[sub, label={[font=\scriptsize]$p_t$}] (v3) at (-0.01,6.78) {3};
\node[dotA, label={[font=\tiny]180:$ $}] (v4) at (0.85,6.59) {4};
\node[dotA, label={[font=\tiny]180:$ $}] (v5) at (1.79,6.46) {5};
\node[dotA, label={[font=\tiny]270:$ $}] (v6) at (2.78,6.42) {6};
\node[dotA, label={[font=\scriptsize]$p_u$}] (v7) at (3.71,6.42) {7};
\node[dotA] (v8) at (4.7,6.47) {8};
\node[pub, label={[font=\scriptsize]$p_v$}] (v9) at (5.58,6.57) {9};
\node[sub,, label={[font=\scriptsize]$p_w$}] (v10) at (6.4,6.79) {10};
\node[] (v11) at (6.72,7.12) {};

\path[->] (v2)  edge [bend right = 5]  (v3);
\path[->] (v3)  edge [bend right = 5] (v4);
\path[->] (v4)  edge [bend right = 5] (v5);
\path[->] (v5)  edge [bend right = 5] (v6);
\path[->] (v6)  edge [bend right = 5] (v7);
\path[->] (v7)  edge [bend right = 5] (v8);
\path[->] (v8)  edge [bend right = 5] (v9);
\path[->] (v9)  edge [bend right = 5] (v10);
\path[-] (v10)  edge [bend right = 5] (v11);

\path[->] (v7)  edge [bend left = 35] node[above] {\scriptsize  too far!}  (v10);


\node (v1) at (4.7086,7.3307) {};
\node (v12) at (5.1368,6.8925) {};
\node (v13) at (5.12,7.34) {};
\node (v14) at (4.7,6.9) {};
\draw[red]  (v1) edge (v12);
\draw[red]  (v13) edge (v14);
\end{tikzpicture}
\caption{Virtual ring section. Shortcoming of~\cite{Siegemund_VR:2015}: unnecessary  forwarding}
\label{fig:toFar}
\end{figure}
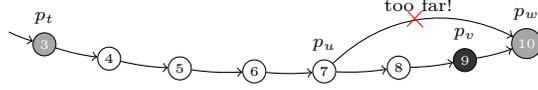

\subsection{Implicit Unsubscription Handling}
\label{sec:unSub}
A node $v$ that ends a subscription to a channel removes the
respective channel identifier from $C_S$. If $C_S$ becomes empty,
then~$v$ ceases to send \textsc{Sub} messages. This triggers updates
in the routing structure $\fwdTableWOUTv$ at other nodes. If a value
in $\fwdTableWOUTv$ has not been renewed after the leasing period
$\delta_S$, then it is identified as \textit{stale}. Stale entries are
used for routing nonetheless, i.e., for incoming publication
forwarding the staleness of the entry is irrelevant. When a \textsc{Sub} message with a ccw closer subscriber position is received, the stale value is replaced and time-stamp $\expireTimer$ is renewed.

If a stale value is not replaced in this way, then a temporary new
next subscriber $nstmp$ is stored when the next \textsc{Sub} message
is received. This $nstmp$ value is treated the same way as the
according stale value in $\fwdTableWOUTv$. Initially $nstmp := \bot$.
When a \textsc{Sub} message is received $nstmp$ is set to the closest
ccw subscribing position stated in the message. For every received
\textsc{Sub} message $nstmp$ is updated to store the closest ccw
subscriber position. While $nstmp$ is updated, routing for
\textsc{Pub} messages still refers to the stale value. After an update
period $T_{w\slash back}$ $nstmp$ replaces the stale value $ns$. The
time-stamp $\expireTimer$ is set to the current time and $nstmp$ resets
to $\bot$. Algorithm~\ref{alg:unsub} describes the details of the
already mentioned $UpdSn()$ function.

\setlength\multicolsep{0pt}
\algrenewcommand\algorithmicindent{0.7em}%

\begin{algorithm}
\caption{Unsubscriptions}
\label{alg:unsub}
\begin{tabular*}{\textwidth}{@{}lll@{}}
Constants: & $T_{clean}$ & clean timer expiration time \\
 		   & $T_{w\slash back}$ & write back period $\rightarrow$ temp value replaces stale value  \\
Functions: &  $isStale(t_{s}, exptimer)$ ~& return $currentLocalTime() - t_{s}$ > $exptimer$\\
\end{tabular*}
\hrule
\begin{multicols}{2}
\begin{algorithmic}  
\Function{$UpdSn(c, SP)$}{}
	\ForAll{$sp \in SP$}
		\ForAll{$rs_j \in RS[c]$}
			\If{(\textit{\textbf{isBetween}}($sp, P[j], rs_j.ns$))}				
				\State{$rs_j.ns := sp$}
				\State{$rs_j.\expireTimer := currentLocalTime()$}									 
			\ElsIf{($isStale(rs_j.t_s, \delta_S)$)}
				\If{(\textit{\textbf{isBetween}}($sp, P[j], rs_j.nstmp$))}
					\State{$rs_j.nstmp := sp$}						
				\EndIf	 
				\EndIf
		\EndFor
	\EndFor		
\EndFunction
\State{\scriptsize //\textsc{Timer\_Clean} initialized on system startup}
\State{Expiration of timer \textsc{Timer\_Clean}:}
\State{\textsc{Timer\_Clean}.set($T_{clean}$)}
\ForAll{$rs \in RS$}
	\If{($isStale(rs.t_s, T_{w\slash back})$)}
		\State{$rs.ns := rs.nstmp$}
		\State{$rs.\expireTimer := currentLocalTime()$}
		\State{$rs.nstmp := \bot$}
	\EndIf
\EndFor
\State{}

\end{algorithmic}
\end{multicols}
\end{algorithm}

Routing is correct during the whole process, that is, no subscriber is
skipped. When a node unsubscribes or an error in $\fwdTableWOUTv$
occurs, it takes at most $T_{w\slash back}$ periods of time until
$\fwdTableWOUTv$ is consistent again. The burden on memory for the
presented unsubscribing scheme is manageable. For each node position
the temp value and $\expireTimer$ has to be accounted for, typically
for each node that means $(2 + 4)C_N$ Bytes. Note that $\deltaS$ can
be  constant or determined during runtime, as
it has a strong correlation to the length of the virtual ring. When the virtual ring is constructed the root node sends a \textsc{Down} message including starting positions of each node into the ring~\cite{Siegemund_VR:2015}. The root node has knowledge of the tree and the ring size. Hence, attaching this value to the \textsc{Down} message of the virtual ring setup algorithm can be realized conveniently. The number of nodes, i.e., the length of the ring can then be used to calculate $\delta_S$. 

\subsection{Self-stabilizing Properties}
Self-stabilization is ensured by the leasing technique. Through the
periodic renewal of subscriptions routing tables are continually
updated and errors are fixed. Storing a time-stamp of the last update
$\expireTimer$ in the routing structure $\fwdTable$ ensures that stale
values can be recognized. Hence, inconsistencies due to message errors,
loss, or obstruction are corrected. Proper publication routing is
ensured by the correctness of $\fwdTable$. Unsubscribing is
self-stabilizing as well. To unsubscribe from a channel a node removes
the channel identifier from $C_S$, this ceases sending \textsc{Sub}
messages. The underlying structures, virtual ring and spanning tree
are built using self-stabilizing algorithms. They are tied together
using collateral composition where a layer does not influences a layer
below.

Self-stabilizing algorithms inherently can not locally decide if the
system is in a globally correct state. Thus, in a faulty case no
guarantees can be given, but that eventually the system will recover.
$\mathcal{PSVR}$ handles dynamic addition and removal of nodes, after
addition to the virtual ring and the dispatch of the first
\textsc{Sub} message it takes no longer than $O(n)$ rounds until
\textsc{Pub} messages will be received.

\section{Evaluation}

$\mathcal{PSVR}$ presents a compromise of size and maintenance effort
for routing tables and routing paths lengths. In order to assess the
increase of the path's lengths a comparison with two routing
strategies was done. In alternative $\mathcal{T}_D$ we computed a
breadth-first tree for each node and recursively pruned leaves not
corresponding to subscribers. Publications made by a node were
forwarded via the corresponding bfs-tree. Alternative $\mathcal{T}_S$
followed the common approach of a single routing tree. We chose a
bfs-tree rooted at a central node. The first alternative comes close
to the optimal structure, i.e., a Steiner tree. We analyzed connected
graphs $G(n,p)$ using the Erdős-Rényi model. The  message gain in percent is
calculated by $\sfrac{100B}{A} -100$, where $B$ is the number of
messages needed by $\mathcal{PSVR}$ and $A$ is the number of message
needed by the approach it is compared to.

The results indicate that
the difference between average path lengths decreases with increasing
density and with an increase of the number $s$ of subscribers. In
Fig.~\ref{fig:eval:AvsBvsC} the gain for both approaches compared to
$\mathcal{PSVR}$ is depicted. For example for $n\le 100$ and $s\ge
10$ the overhead of $\mathcal{PSVR}$ is less than 8 \%. The same trend
-- but at a lower level -- was observed for $\mathcal{T}_S$. We
conclude that except for very small numbers of subscribers the
overhead of $\mathcal{PSVR}$ with respect to path lengths is
surprisingly low. With increasing density the number of shortcuts
increases, allowing for shorter routing paths. Furthermore, with
growing number of nodes the gain follows the same distribution.

\begin{figure}[t]
	\centering
	\includegraphics[width=1\textwidth]{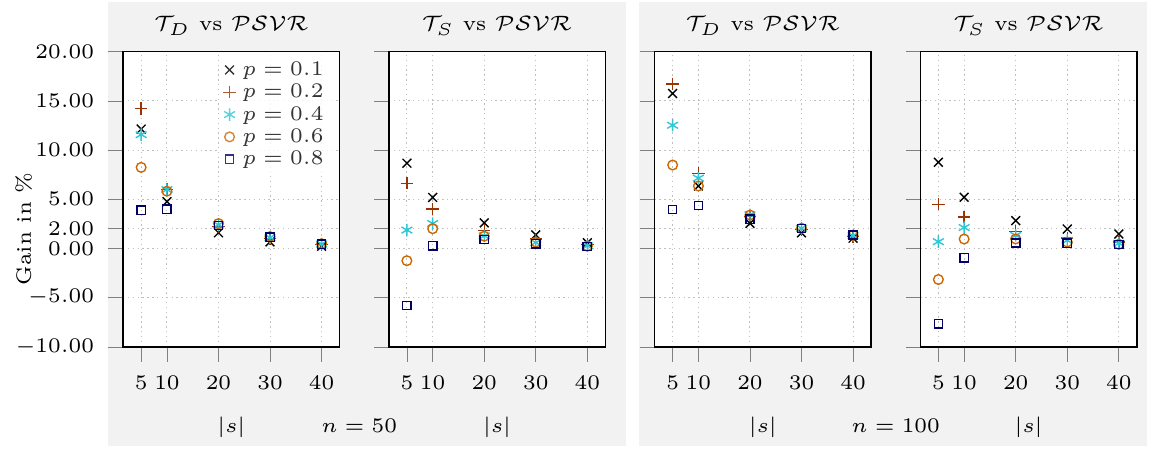}
	\captionof{figure}{Comparison of $\mathcal{T}_D$ and $\mathcal{T}_S$ vs $\mathcal{PSVR}$}
	\label{fig:eval:AvsBvsC}
\end{figure}

Next we analyzed the delivery ratio using implementations based on the
\OMNET simulation environment and the \MIXIM framework to employ a
radio model compared to the self-stabilizing tree based approach by
Shen et al.~\cite{Shen:2007}. Both approaches, Shen and $\mathcal{PSVR}$ use the same dynamically computed spanning tree. The throughput of both approaches is close to identical as presented in Fig.~\ref{fig:appendix:throughput}. Even though $\mathcal{PSVR}$ needs to maintain the virtual ring structure, shorter routes as depicted in~Fig.~\ref{fig:appendix:histogram}, compensate this handicap.

\vspace{-3mm}
\begin{figure}[h]
	\centering
	\includegraphics[width = \textwidth]{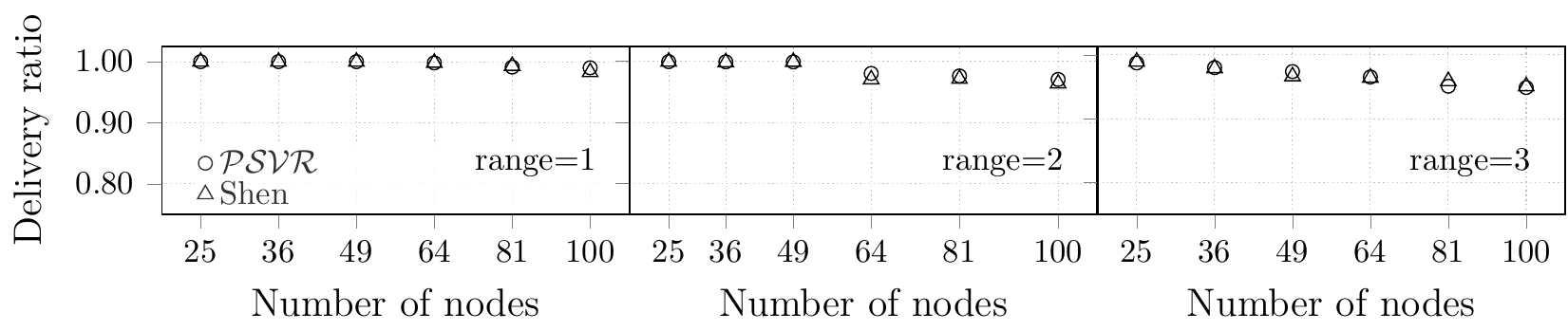}
	\caption{Publication delivery for varying densities, compared to Shen's approach.}
	\label{fig:appendix:throughput}
\end{figure}

\vspace*{-5mm}
Figure~\ref{fig:appendix:histogram} shows the constructed path lengths
for simulations involving a radio model and the complete network stack
described in Section~\ref{sec:foundation}. The depicted histogram
shows the route lengths for a scenario with a single subscriber and
each node is a publisher. As can be seen $\mathcal{PSVR}$ constructs
more short routes (i.e., up to 3 hops) as well as shorter
routes on average than Shen's algorithm. Increasing the number of
subscribers diminishes the gain of $\mathcal{PSVR}$ to the
point that all nodes are subscribers and no shortcut is taken anymore
but $\mathcal{PSVR}$ resembles routing on a spanning tree, i.e.,
$\mathcal{PSVR}$ falls back to the  Shen's approach.

\begin{figure}
\centering
\begin{minipage}{.47\textwidth}
 \centering
  \includegraphics[width=1\textwidth]{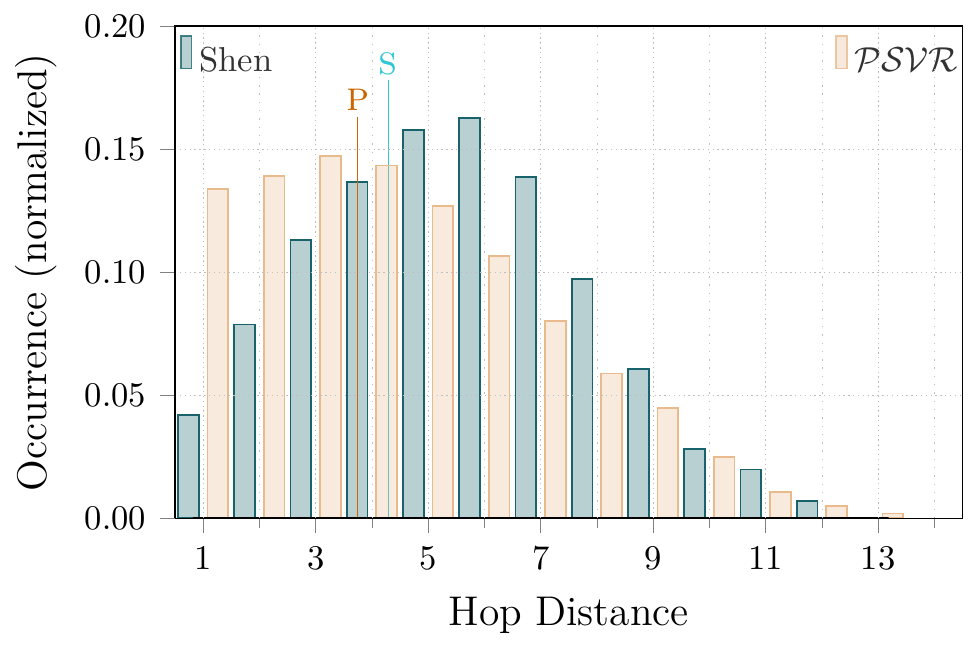}
  \captionof{figure}{Hop distances of delivery paths. Average distance depicted by \textcolor{cyan}{\textit{S}} for Shen's approach and \textcolor{orange}{\textit{P}} for $\mathcal{PSVR}$.}
  \label{fig:appendix:histogram}
\end{minipage}\hfill
\begin{minipage}{.46\textwidth}
  \centering
\includegraphics[width=1.07\textwidth]{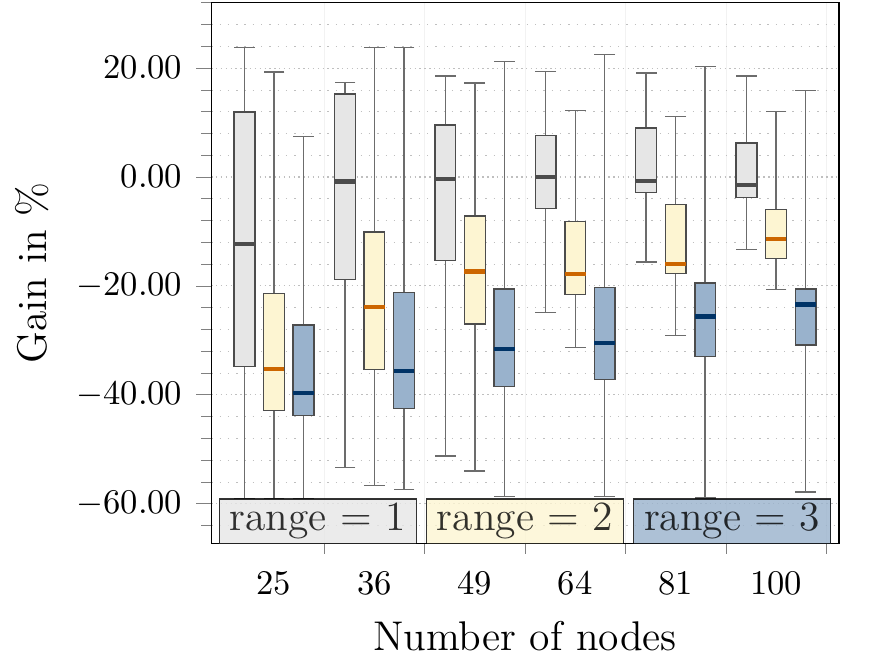}
  \captionof{figure}{Average gain per subscription.}
  \label{fig:appendix:msg-avg-gain}
\end{minipage}
\end{figure}

The gain varies substantially depending on the density (density grows with communication \textit{range}) as can be seen in the boxplots in Fig.\ref{fig:appendix:msg-avg-gain}. The denser the network the more potential shortcuts, hence, the gain in saved
messages is increased. This also holds for the simulations with \OMNET
and the applied radio model (includes path loss and slow fading).

\subsection{Real World Deployment: Throughput and Robustness}

\PSVR delivers all publications while no error in the underlying routing structure occurs. Figure~\ref{fig:RW} shows the delivery ratio in percent for multiple tests on a real sensor network deployment at the Fit-IoT Lab in France~\cite{fambon2014fit}. For each number of nodes 20 tests are conducted each lasting two hours. An initial setup phase of 10 minutes is granted until publication delivery starts. Publications were dispatched every 20s. In Fig.~\ref{fig:RW} (right) the same experiment is run for ten hours. Whenever an error occurs in the network the publication delivery ratio decreases, in error free phases the value can recover. The figure shows a single representative example for 10, 20, and 50 nodes. As the Fit-IoT Lab can be used at the same time by other people, possibly executing bandwidth demanding experiments, a long term test shows the recovery strength of our approach.

\vspace{-3mm}
\begin{figure}[h]
\centering
  \includegraphics[width=1\textwidth]{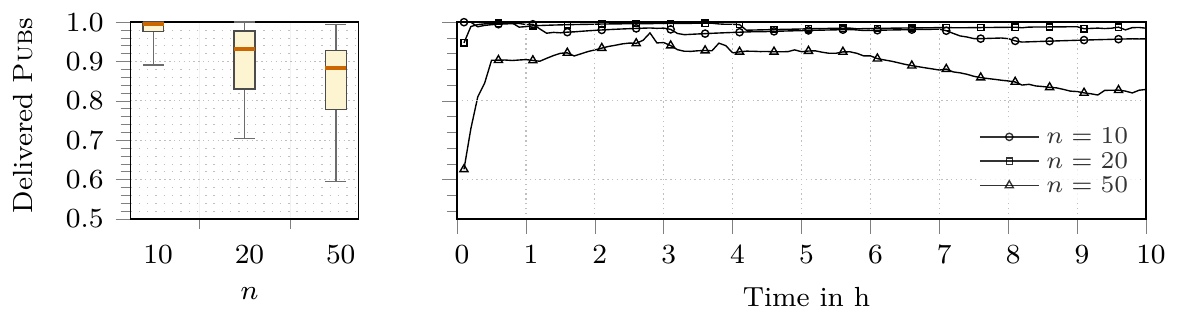}
\caption{Delivered publications average and long term test snap shot.}
\label{fig:RW}
\end{figure}

As can be seen by Fig.~\ref{fig:RW} (left) unsurprisingly an increasing number of nodes is more demanding on the \ps system. In short periods of time and mostly when the wireless channel is in use by other experiments the delivery ratio decreases. As can be seen in Fig.~\ref{fig:RW} (right), the 10 minute setup period was occasionally to short for the 50 nodes experiments also causing a drop in the delivery turnout. On average it stayed in the 80\% to 90\% margin, which we find tolerable considering the benefits of inherent fault tolerance and dynamic adaptability. 


\section*{Conclusion}
\vspace{-2mm}
The presented pub/sub system $\mathcal{PSVR}$ significantly enhances
the algorithm of \cite{Siegemund_VR:2015}. $\mathcal{PSVR}$ is
optimized for scenarios where communications links are unstable and
nodes frequently change subscriptions. It is a compromise of size and
maintenance effort for routing tables due to sub- and unsubscriptions
and the length of routing paths. Simulations and verification against
theoretical, closer to optimal solutions revealed that our approach
gives a fair trade-off between the scalability of the support
structure and the message forwarding overhead. Real world tests
confirmed its usability. The approach scales with the number of nodes
and is suitable for wireless ad-hoc networks.

\vspace{-3mm}
\bibliographystyle{plain}
\bibliography{document}

\end{document}